\author{Doratossadat Dastgheib\footnote{d\_dastgheib@sbu.ac.ir} , Hadi Farahani\footnote{h$_-$farahani@sbu.ac.ir}}
\newtheorem{proposition}{\textbf{Proposition}}[section]
\newtheorem{lemma}{\textbf{Lemma}}[section]
\newtheorem{corollary}{\textbf{Corollary}}[section]
\newtheorem{theorem}{\textbf{Theorem}}[section]
\newenvironment{definition}{\textbf{Definition.}}{\hfill$\blacktriangleleft$}
\newcounter{nremark}
\newtheorem{remark}{\refstepcounter{nremark}\textbf{Remark}}{}
\newcommand{\LB}[1]{(\L$_B$#1)}
\numberwithin{equation}{section}
\author{Doratossadat Dastgheib\footnote{d\_dastgheib@sbu.ac.ir} , Hadi Farahani\footnote{h$_-$farahani@sbu.ac.ir}}
\title{Some Doxastic \L{ukasiewicz}  Logics	
	}
\begin{document}
\maketitle
\begin{abstract}

We propose a  doxastic \L ukasiewicz logic \textbf{B\L} that is sound and complete with respect to the class of   Kripke-based models in which  atomic propositions and  accessibility relations are both infinitely valued in the standard  MV-algebra [0,1]. We also introduce some extensions of \textbf{B\L} corresponding to   axioms \textbf{D}, \textbf{4}, and \textbf{T} of classical epistemic logic. Furthermore,  completeness of these extensions  are established corresponding to the appropriate classes of  models.

\end{abstract}
 
{\small \textit{keywords:} \L ukasiewicz Logic, Doxastic, Kripke Models, Completeness}

\section{Introduction}
%

 Interpreting modal operators that have  
 notions of \textit{belief} and \textit{knowledge} is challenging. One of the pioneers in formally defining these concepts is Hintikka, as demonstrated in his book \cite{Hintikka1962}. Subsequently, Fagin et al. further ventured into the realm of knowledge-based reasoning in 1995, delving into its practical applications across various domains \cite{Fagin1995}.
Previously, significant attention has been directed towards the exploration of many-valued modal logics. Various researchers have introduced finite many-valued modal logics such as \cite{Ftting1991,Fitting1992, Morgan1979, Morikawa1989, Ostermann1988, Segerberg1967, Thomason1978}.

Recent years have witnessed  endeavors to enrich many-valued logics with modal operators. In \cite{Metcalfe2009}, a basic propositional G\"odel modal logic is addressed.
In \cite{Caicedo2015,Caicedo2010}, Caicedo and et. al.  proposed some bi-modal G\"odel and G\"odel modal logics and proved their  completeness based on Kripke-style semantics where both propositions and accessibility relations are infinite-valued. They also provided semantics for finite model property to show the decidability of desired logics \cite{Caicedo2013}.
Similarly to the previously mentioned research, \cite{Dastgheib2020} proposed several  extensions of epistemic G\"odel logic, incorporating both fuzzy atomic propositions and fuzzy accessibility relations. Prior to this work,  \cite{Gounder1998} also tackled the topics of fuzzy epistemic and fuzzy deontic logics.
  A bi-modal logic axiomatization  arising from  G\"odel-Kripke structures with accessibility relations that take crisp values was established in \cite{Rodriguez2020}. 
 
 Some many-valued generalizations of modal logic which their Kripke-based semantics are structured with  crisp relations in which variables take their truth values in the MV-algebra [0, 1] introduced in \cite{Hansoul2011}. Some modal extensions of product fuzzy logic are investigated in \cite{Vidal2015}. 
 Di Nola and et. al. in \cite{Nola2015}, established three-valued multi-modal \L ukasiewicz logic and proved its completeness.
Recently, in \cite{Vidal2021} it is shown that a large family of modal \L ukasiewicz logics are not axiomatizable.

In this paper, we present a  doxastic \L ukasiewicz logic \textbf{B\L} and its some extensions corresponding to  axioms \textbf{D}, \textbf{4},  and \textbf{T} in classical epistemic logic. We add some modal (belief) operator $ B$ to  the language of  \L ukasiewicz  logic. Our proposed semantics is Kripke-based  such that the  atomic propositions and  accessibility relations both take values in the standard  MV-algebra [0,1]. The interpretation of belief operator $B$ is inspired by the truth definition of belief in classical epistemic logic, with some fuzzy adaptations as seen in \cite{Gounder1998} and \cite{Dastgheib2020}.

This paper is organized as follows. In section \ref{preliminiaries}, we provide a review of the axioms of \L ukasiewicz logic that will be utilized throughout this paper. In section \ref{general_belief}, we first  introduce the language of doxastic \L ukasiewicz logic, which expands the \L ukasiewicz logic language by incorporating a modal (belief) operator.
 Then present a  Kripke-based semantics in which both atomic propositions and accessibility relations take fuzzy values. Finally, we propose several axiomatic systems and demonstrate their  soundness and completeness relative to the corresponding semantics. 

\section{Preliminaries} \label{preliminiaries}

In this section, we give a short overview of propositional basic fuzzy logic (\textbf{BL}) and \L ukasiewicz logic from \cite{Hajek1998}. 
In the following, axioms (A1) to (A7) are referred to as the \textbf{BL} axioms, while (A8) represents a scheme that can be established using the axioms of \textbf{BL}. 
\begin{equation*}
\begin{array}{ll}
(A1) (\varphi \rightarrow \psi) \rightarrow ((\psi \rightarrow \chi) \rightarrow (\varphi \rightarrow \chi)) & (A5) (\varphi \rightarrow (\psi \rightarrow \chi)) \leftrightarrow ((\varphi \, \& \, \psi) \rightarrow \chi) \\
(A2) (\varphi \, \& \, \psi)\rightarrow \varphi & (A6) ((\varphi \rightarrow \psi)\rightarrow \chi) \rightarrow (((\psi \rightarrow \varphi)\rightarrow \chi)\rightarrow \chi)\\
(A3) (\varphi \, \& \, \psi) \rightarrow (\psi \, \& \, \varphi) & (A7) \perp \rightarrow \varphi \\ 
(A4) (\varphi \, \& \, (\varphi \rightarrow \psi)) \rightarrow (\psi \, \& \, (\psi \rightarrow \varphi)) & (A8) \varphi \rightarrow (\psi \rightarrow (\varphi \,\&\, \psi))\\
\end{array}
\end{equation*}

Propositional \L ukasiewicz logic; denoted by \textbf{\L}; is an extension of \textbf{BL}  equipped with the double negation axiom $\neg \neg \varphi \rightarrow \varphi$.
It is possible to demonstrate that all axioms of \textbf{\L} can be derived from the following four axioms:
\begin{itemize}
\item[(\L{}1)] $\varphi \rightarrow (\psi \rightarrow \varphi)$
\item[(\L{}2)] $(\varphi \rightarrow \psi) \rightarrow ((\psi \rightarrow \chi) \rightarrow (\varphi \rightarrow \chi))$
\item[(\L{}3)] $(\neg \varphi \rightarrow \neg \psi) \rightarrow (\psi \rightarrow \varphi)$
\item[(\L{}4)] $((\varphi \rightarrow \psi) \rightarrow \psi) \rightarrow ((\psi \rightarrow \varphi) \rightarrow \varphi)$
\end{itemize}

If we consider connective $\leftrightarrow$ as defined in
\cite{Hajek1998}, then  $\neg \neg \varphi \leftrightarrow \varphi$ is a theorem of \textbf{\L} which results that $\neg \neg \varphi$ and $\varphi$ are provably equivalent.
In the following section, we outline several properties of propositional \L ukasiewicz logic, which will be utilized throughout this paper. Specifically, we refer to the properties identified as (\L12), (\L13), and (\L14) as a form of \textit{replacement} within the context of this paper. For additional details please see \cite{Hajek1998}.
$$\begin{array}{llcll}
	(\text{\L}5) & \neg (\varphi \,\&\, \psi) \leftrightarrow (\neg \varphi \veebar \neg \psi) & \quad &
	& \\
	(\text{\L}6) & \neg (\varphi \veebar \psi) \leftrightarrow (\neg \varphi \, \& \, \neg \psi) & \quad &
	&  \\
	(\text{\L{}}7)& (\varphi \veebar \psi) \leftrightarrow (\neg \varphi \rightarrow \psi) & \quad&
	&\\
	(\text{\L}8)& \varphi \rightarrow (\varphi \veebar \psi) & \quad&
	& \\
	(\text{\L{}}9)& (\varphi \,\&\, (\varphi \rightarrow \psi)) \rightarrow \psi & \quad & & \\
	(\text{\L{}}10) & 	\multicolumn{4}{l}{((\varphi_1 \rightarrow \psi_1) \,\&\, (\varphi_2 \rightarrow \psi_2))\rightarrow((\varphi_1 \,\&\, \varphi_2)\rightarrow (\psi_1\,\&\, \psi_2))}\\
	
	(\text{\L{}}11) & \neg \neg \varphi \leftrightarrow \varphi & \quad &
	& \\
	
	(\text{\L{}}12) & (\varphi \leftrightarrow  \psi) \rightarrow ((\varphi \rightarrow \chi) \leftrightarrow (\psi \rightarrow \chi)) & \quad &
	& \\
	
	(\text{\L{}}13) &(\varphi \leftrightarrow  \psi) \rightarrow ((\chi \rightarrow \varphi) \leftrightarrow  (\chi \rightarrow \psi))& \quad &
	& \\
	
	(\text{\L{}}14) &(\varphi \leftrightarrow  \psi) \rightarrow ((\varphi \,\&\, \chi) \leftrightarrow  (\psi \,\&\, \chi))& \quad &
	&  \\
	
	(\L15) &(\varphi \rightarrow \perp) \leftrightarrow \neg \varphi & & & 
\end{array}$$
\vskip 0.1cm

In this paper, we consistently represent the set of atomic propositions as $\mathcal{P}$ and the set of agents as $\mathcal{A}$. Additionally, we utilize the symbol $\perp$ as a propositional constant, which always has the value of $0$.
\\

\section{Some Doxastic Extensions of \L{ukasiewicz } Logic } \label{general_belief}

In this section, first we expand the language of propositional \L ukasiewicz logic with a belief operator. Next, we introduce the concept of a D\L L-model. 
Also, we propose some axiomatic systems  that expand upon \L ukasiewicz logic and  demonstrate both the soundness and completeness of these logical systems.
\subsection{Syntax and Semantics}
\begin{definition}
The \textit{ doxastic \L ukasiewicz language}
; in short D\L L, is defined using the following BNF:
$$\varphi::= p \;|\; \neg \varphi\;|\; \varphi \, \& \, \varphi \;|\; \varphi \rightarrow \varphi \;|\; B_a \varphi  $$
where $p\in\mathcal{P}$ and $a\in \mathcal{A}$. The other connectives $\veebar$, $\wedge$ and $\vee$ can be defined similar to the \L ukasiewicz logic as follows:
$$\begin{array}{ll}
\varphi \veebar \psi = \neg \varphi \rightarrow \psi, &
\varphi \wedge \psi = \varphi \,\&\, (\varphi \rightarrow \psi),\\
\multicolumn{2}{l}{\varphi \vee \psi = ((\varphi \rightarrow \psi) \rightarrow \psi ) \wedge ((\psi \rightarrow \varphi)\rightarrow \varphi).}
\end{array}$$
\end{definition}

For simplicity we use sub-index ``a" in the provided examples. Henceforth, we will use $B \varphi$ and $r(s,s')$ instead of $B_a \varphi$ and $r_a(s,s')$, respectively.
\\

	\begin{definition} 
		A \textit{doxastic \L ukasiewicz logic model} or in short \textit{D\L L-model} is a tuple $\mathfrak{M}= (S, r_{a_{|a\in \mathcal{A}}}, \pi)$ in which $S$ is a set that includes the states of  the model, $r_{a_{|a\in \mathcal{A}}}: S\times S \rightarrow [0,1]$ is the accessibility relation, and $\pi: S\times \mathcal{P} \rightarrow [0,1]$ is the valuation function that assigns a value to each proposition in each state.
	\end{definition}
	\vskip 0.1cm
	In contrast to the classical Kripke structures, the above definition incorporates fuzzy values for both accessibility relations and atomic propositions. The values assigned to the accessibility relations reflect the ability of agents to differentiate between two arbitrary states within the model. For any given agent $a\in \mathcal{A}$ and states $s$ and $s'\in S$, a higher value of $r_a(s,s')$ implies that agent $a$ has greater difficulty distinguishing between states $s$ and $s'$. Consequently, when $r_a(s,s')=0$, it signifies that the agent can fully discern the distinction between the two states, while $r_a(s,s')=1$ indicates that states $s$ and $s'$ are entirely indistinguishable for agent $a$.

\begin{definition}
Let $\mathfrak{M} = (S, r, \pi)$ be a D\L L-model. 
For each  D\L L-formula $\varphi$ and a state $s\in S$, $V(s,\varphi)$ denotes  the extension of the valuation function, defined recursively as follows: (for simplicity we use $V_s(\varphi)$ instead of $V(s,\varphi)$)
\begin{align*}
& V_s(p) = \pi(s,p) \;\; \forall p\in \mathcal{P}, & \\
& V_s(\neg \varphi) = 1- V_s(\varphi), & \\
& V_s(\varphi \,\&\, \psi) = \max\{0, V_s(\varphi)+V_s(\psi) - 1\}, & \\
& V_s(\varphi \rightarrow \psi) = \min \{1, 1- V_s(\varphi)+V_s(\psi)\},& \\
& V_s(B  \varphi) = \inf_{s'\in S} \max\{1-r (s,s'),V_{s'}(\varphi)\}. &
\end{align*}
Values of $\veebar$, $\wedge$, and $\vee$ are defined as the same as \L ukasiewicz logic as follows:
\begin{align}
\label{Luka_veebar}
& V_s(\varphi \veebar \psi) = \min\{1,V_s(\varphi)+V_s(\psi)\}, & \\ 
\label{Luka_wedge}
& V_s(\varphi \wedge \psi) = \min \{V_s(\varphi) , V_s(\psi)\}, & \\ 
\label{Luka_vee}
& V_s(\varphi \vee \psi) = \max\{V_s(\varphi) , V_s(\psi)\}. & 
\end{align}
\end{definition}

\begin{remark}
The definition of the truth value of belief in classical epistemic logic is as follows:
\begin{equation}\label{eq_imp2}
V_s(B  \varphi)=1 \;\iff\; \forall s'\in S\;\; (r (s,s')=1 \Rightarrow V_{s'}(\varphi)=1).
\end{equation}
As we see in (\ref{eq_imp2}), the classical definition of $B\varphi$ assures that an agent believes in $\varphi$ in a state $s$ if and only if $\varphi$ is valid in all states $s'$ that are accessible from $s$.
In classical logic, the right side of (\ref{eq_imp2}) is equivalent to $\neg (r(s,s')=1) \vee (V_{s'}(\varphi)=1)$ for all $s'\in S$. Following this idea we defined  the semantics of belief as the infimum of all $\neg r(s,s') \vee V_{s'} (\varphi)$ for all $s'\in S$ in the definition above. Also,
%
note that in this definition 
 when an agent $a$ can distinguish between two states $s$ and $s'$, her/his belief in state $s$ is entirely independent of the value of $\varphi$ in state $s'$.
\end{remark}

	In the following, we give some properties of fuzzy relations in Kripke models.
	\\
	\begin{definition}
		The D\L L-model $\mathfrak{M} = (S, r_{a_{|a\in \mathcal{A}}}, \pi)$ is called:
		\begin{itemize}
			\item \textit{serial} if for all $a\in \mathcal{A}$ and all $s\in S$, there exists a state $s'\in S$ such that $r_a(s,s')=1$.
			\item \textit{reflexive} if for all $a\in \mathcal{A}$ and all $s\in S$, we have $r_a(s,s)=1$.
			\item \textit{transitive} if for all $a\in \mathcal{A}$ and all $s,s',s''\in S$, $r_a(s,s'') \geq \min \{r_a(s,s'), r_a(s',s'')\}$.
			\item \textit{recognizable} if for all $a\in \mathcal{A}$ and all $s,s'\in S$, we have \mbox{$r_a(s,s')\leq 0.5$}.
		\end{itemize}
		If $\mathcal{M}$ represents a class of D\L L-models, we say $\mathcal{M}$ is serial if, for all $\mathfrak{M}\in \mathcal{M}$, $\mathfrak{M}$ is serial. Similarly, we call $\mathcal{M}$ reflexive, transitive, or recognizable  if, for all $\mathfrak{M} \in \mathcal{M}$, $\mathfrak{M}$ is reflexive, transitive, or recognizable, respectively.
\end{definition}

\subsection{Soundess and Completeness} \label{gb_sound_complete}

\begin{definition} \label{def_validity}
Let $\varphi$ be a D\L L-formula and $\mathfrak{M} = (S, r_{a_{|a\in \mathcal{A}}}, \pi)$ be a D\L L-model. We say that $\varphi$ is \textit{valid in a pointed model} $(\mathfrak{M},s)$ if $V_s(\varphi)=1$ and denote it by $(\mathfrak{M},s)\vDash \varphi$. If for all $s\in S$ we have $(\mathfrak{M},s)\vDash \varphi$, then we call it $\mathfrak{M}$\textit{-valid} and use the notation $\mathfrak{M} \vDash \varphi$. If $\varphi$ is $\mathfrak{M}$-valid for all models $\mathfrak{M}$ in a class of models $\mathcal{M}$, we say that $\varphi$ is $\mathcal{M}$-\textit{valid}, and show it by $\mathcal{M}\vDash \varphi$.
We use the  notation $\vDash \varphi$, if for all models $\mathfrak{M}$ we have $\mathfrak{M} \vDash \varphi$ and call the formula $\varphi$  a \textit{valid} formula.

\end{definition}

\begin{proposition} \label{distribution_ax}
For all D\L L-formulae $\varphi$ and $\psi$,
the following schemes are valid.
\begin{enumerate}
\item $(B  \varphi \,\&\, B  (\varphi \rightarrow \psi) )\rightarrow B  \psi  $
\item $B  (\varphi \rightarrow \psi) \rightarrow (B  \varphi \rightarrow B  \psi)$
\end{enumerate}
\end{proposition}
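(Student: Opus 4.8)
The plan is to notice that both schemes express the single semantic fact that $B$ distributes over implication in the Łukasiewicz sense, and then to prove that fact by a pointwise estimate robust enough to survive the infimum in the clause defining $B$. First I would recall that $V_s(\alpha \rightarrow \beta) = 1$ holds exactly when $V_s(\alpha) \le V_s(\beta)$. Writing $a = V_s(B\varphi)$, $b = V_s(B(\varphi\rightarrow\psi))$ and $c = V_s(B\psi)$, the semantics of $\&$ and $\rightarrow$ turn the first scheme into the requirement $c \ge \max\{0,\,a+b-1\}$ and the second into the requirement $\min\{1,\,1-a+c\} \ge b$. Since $c \ge 0$ always, a short check shows that both requirements are equivalent to the single inequality
\[
V_s(B\psi) \;\ge\; V_s(B\varphi) + V_s(B(\varphi\rightarrow\psi)) - 1, \qquad (\dagger)
\]
so it suffices to establish $(\dagger)$ at every state $s$ of every D\L L-model. (One could instead derive the second scheme from the first via residuation $(A5)$ and commutativity $(A3)$, but proving $(\dagger)$ directly is cleaner.)

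Next I would unfold the three belief values as infima over the accessible states. For a fixed $s'$ abbreviate $t = 1 - r(s,s')$, $x = V_{s'}(\varphi)$, $y = V_{s'}(\psi)$, and $q = V_{s'}(\varphi\rightarrow\psi) = \min\{1,\,1-x+y\}$; then $a$, $b$ and $c$ are the infima over $s'$ of the quantities $\max\{t,x\}$, $\max\{t,q\}$ and $\max\{t,y\}$ respectively. Because an infimum of numbers each bounded below by a constant is itself bounded below by that constant, $(\dagger)$ follows once I show that for every $s'$ the corresponding term $\max\{t,y\}$ of $c$ is $\ge a+b-1$. Using that the $s'$-term $\max\{t,x\}$ is $\ge a$ and that $\max\{t,q\}$ is $\ge b$, it is then enough to prove the pointwise inequality
\[
\max\{t,y\} \;\ge\; \max\{t,x\} + \max\{t,q\} - 1. \qquad (*)
\]

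Finally I would verify $(*)$ by splitting on whether $t$ dominates $x$ and whether $t$ dominates $q$. In the three cases in which $t \ge x$ or $t \ge q$ holds, the right-hand side collapses so that $(*)$ reduces to one of the trivial bounds $t \le 1$, $x \le 1$, $q \le 1$. The remaining case, where $t < x$ and $t < q$, is the crux: there $(*)$ becomes $y \ge x + q - 1$, which is precisely the semantic form of modus ponens, i.e. property~$(\text{\L}9)$ read as $\max\{0,\,V_{s'}(\varphi) + V_{s'}(\varphi\rightarrow\psi) - 1\} \le V_{s'}(\psi)$, and it follows at once from $q \le 1 - x + y$. The main obstacle is conceptual rather than computational: since $B$ is defined through an infimum rather than a universal quantifier, the classical $K$-axiom argument does not transfer verbatim, and one must isolate a pointwise bound that passes to the infimum; once $(*)$ is in hand, the rest is routine.
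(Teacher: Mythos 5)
Your proof is correct and takes essentially the same route as the paper's: the same pointwise inequality $\max\{1-r(s,s'),V_{s'}(\varphi)\}+\max\{1-r(s,s'),V_{s'}(\varphi\rightarrow\psi)\}-1\le\max\{1-r(s,s'),V_{s'}(\psi)\}$, passed through the infimum to yield $V_s(B\varphi)+V_s(B(\varphi\rightarrow\psi))-1\le V_s(B\psi)$. The only cosmetic differences are that you verify the pointwise bound by an explicit case split (the paper merely asserts it, with a typo in the $\min$ term) and that you obtain scheme (2) directly from the same inequality $(\dagger)$ instead of deriving it from scheme (1) via the residuation axiom (A5), as the paper does.
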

\begin{proof}
Let $\mathfrak{M} = (S, r_{a_{|a\in \mathcal{A}}},\pi)$ be an arbitrary D\L L-model.
In order to prove part (1), first note that
we have the following statement for all $s'\in S$
$$\max\{1-r(s,s'), V_{s'}(\varphi)\}+ \max\{1-r(s,s'), min\{1,1-V_{s'}(\varphi), V_{s'}(\psi)\}\}-1 \leq \max\{1-r(s,s'), V_{s'}(\psi)\}.$$
Thus, we have
\begin{eqnarray*}
&\inf_{s' \in S}\max\{1-r(s,s'), V_{s'}(\varphi)\}+ \inf_{s' \in S}\max\{1-r(s,s'), min\{1,1-V_{s'}(\varphi)+ V_{s'}(\psi)\}\} -1 & \\
&\leq \inf_{s' \in S}\max\{1-r(s,s'), V_{s'}(\psi)\},&
\end{eqnarray*}
 and, therefore
\begin{eqnarray*}
&\max[0, \inf_{s' \in S}\max\{1-r(s,s'), V_{s'}(\varphi)\}+ \inf_{s' \in S}\max\{1-r(s,s'), min\{1,1-V_{s'}(\varphi)+ V_{s'}(\psi)\}\}-1]& \\
&\leq \inf_{s' \in S}\max\{1-r(s,s'), V_{s'}(\psi)\}.&
\end{eqnarray*}	
The left side is $V_s(B\varphi \,\&\,B (\varphi \rightarrow \psi))$, and the right side is $V_s(B \psi)$. So,
$V_s(B\varphi \,\&\,B (\varphi \rightarrow \psi)) \leq V_s(B \varphi)$ and hence
$(B \varphi \,\&\, B(\varphi \rightarrow \psi)) \rightarrow B\psi$ is valid.

\vskip 0.1cm 
The validity of  part (2) comes from the equivalency of this scheme and part (1) using axiom (A5)  of BL.
\end{proof}
\begin{proposition} \label{notvalid_gb}
The following schemes are not valid.
\begin{enumerate}
\item $\neg B  \perp$
\item $B  \varphi \rightarrow B  B  \varphi$
\item $\neg B  \varphi \rightarrow B  \neg B  \varphi$
\item $B  \varphi \rightarrow \varphi$
\end{enumerate}
\end{proposition}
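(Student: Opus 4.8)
The plan is to refute each scheme semantically: for each one I would exhibit a concrete finite D\L L-model $\mathfrak{M}=(S,r,\pi)$ and a state $s$ at which the value of the formula is strictly below $1$. By Definition \ref{def_validity} a single such pointed model witnesses that the scheme is not valid (not $\mathcal{M}$-valid for the class of all models). The guiding heuristic is that each of these four schemes is the fuzzy analogue of a frame-correspondence principle (seriality for (1), transitivity for (2), the Euclidean property for (3), reflexivity for (4)), so I would deliberately build a model violating the relevant condition and then read off a numerical gap.

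Parts (1) and (4) are the easy ones, handled by a single-state non-serial (equivalently non-reflexive) model. Taking $S=\{s\}$ with $r(s,s)=0$, the constant value $V_{s'}(\perp)=0$ gives $V_s(B\perp)=\inf_{s'}(1-r(s,s'))=1$, hence $V_s(\neg B\perp)=0$, refuting (1). For (4) I set additionally $\pi(s,p)=0$ in the same model; then $V_s(Bp)=\max\{1-0,\,0\}=1$ while $V_s(p)=0$, so $V_s(Bp\to p)=\min\{1,1-1+0\}=0$.

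For part (2), positive introspection, the key is to break transitivity. I would use three states $s,t,u$ with $\pi(\cdot,p)=1$ at $s,t$ and $\pi(u,p)=0$, and set $r(s,s)=r(s,t)=r(t,t)=1$, $r(t,u)=1$, but $r(s,u)=0$ (and $r(t,s)=0$) — so $s$ sees $t$ fully and $t$ sees the $p$-false state $u$ fully, yet $s$ does not see $u$, violating $r_a(s,u)\ge\min\{r_a(s,t),r_a(t,u)\}$. Then $V_s(Bp)=1$ (the only $p$-false state $u$ is invisible from $s$), whereas $V_t(Bp)=0$ (since $t$ fully sees $u$). Because $r(s,t)=1$, the state $t$ contributes the term $\max\{1-1,\,V_t(Bp)\}=0$ to the infimum defining $V_s(BBp)$, forcing $V_s(BBp)=0$ and hence $V_s(Bp\to BBp)=0$. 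For part (3), negative introspection, I would break the Euclidean property using the same three-state template: arrange $s$ to fully access a $p$-false state (giving $V_s(Bp)=0$, so $V_s(\neg Bp)=1$) while also fully accessing a state $t$ from which every $p$-false state is invisible (so $V_t(Bp)=1$, i.e. $V_t(\neg Bp)=0$); the fully accessible $t$ then drags $V_s(B\neg Bp)$ down to $0$, yielding $V_s(\neg Bp\to B\neg Bp)=0$.

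The computations are routine evaluations of the $\max/\min/\inf$ clauses, so there is no deep obstacle; the only delicate point, in (2) and (3), is the bookkeeping of simultaneously keeping the antecedent's value at $1$ while forcing the consequent to $0$. This is achieved by inserting exactly one fully accessible "bad" witness state that dominates the relevant infimum, and arranging the accessibility values so that precisely the intended states contribute — which is exactly what the failure of transitivity (respectively the Euclidean property) permits.
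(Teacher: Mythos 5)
Your proposal is correct: each of the four countermodels is a legitimate D\L L-model (the definition places no constraints whatsoever on $r$, so a singleton with $r(s,s)=0$ is allowed), and the evaluations check out; in particular, in part (2) the values $r(u,\cdot)$ you leave unspecified are harmless, since $u$ contributes $\max\{1-r(s,u),\,V_u(Bp)\}=\max\{1,\,V_u(Bp)\}=1$ to the infimum defining $V_s(BBp)$ however they are completed. Your route differs from the paper's in execution: the paper refutes (1)--(3) with one shared graded model $\mathfrak{M}_1$ (three states, accessibility value $0.6$, propositional values $0.8$, $0.9$, $0.7$), obtaining intermediate truth values $0.6$, $0.9$, $0.6$, and handles (4) in the same model via a constant proposition $q$ of value $0.2$ (yielding $0.8$); you instead build a separate crisp model per scheme, guided by classical frame correspondence, and drive each value all the way to $0$. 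Your crisp countermodels buy a slightly sharper conclusion --- the schemes fail already on $\{0,1\}$-valued accessibility and valuations, so their failure is not an artifact of fuzziness, and they fail maximally --- at the cost of four constructions instead of one; the paper's single model is more economical and simultaneously exhibits genuinely fuzzy behavior. One small caveat on your motivating gloss: for (3) you invoke the Euclidean property, but the frame condition this paper actually pairs with negative introspection is recognizability, $r\le 0.5$ (Lemma \ref{gb_structures_properties}, part (4)); this does not affect your argument, since your countermodel uses accessibility value $1$ and is therefore non-recognizable, exactly as the lemma requires a countermodel to be.
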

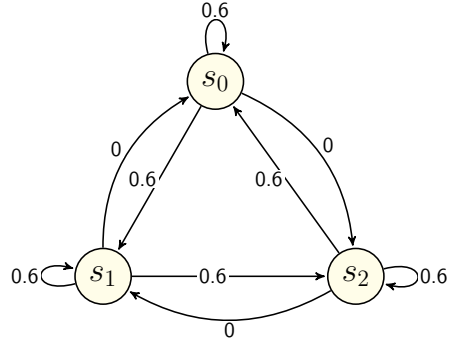
\begin{figure} 
\begin{center}
\resizebox*{6cm}{!}{
	
	\begin{tikzpicture}[->,>=stealth',shorten >=1pt,auto,
		thick, main node/.style={circle,fill=yellow!10,draw,
			font=\sffamily\Large\bfseries,minimum size=10mm}]
		\node[main node] (s0) at (1.5,2) {$s_0$};
		\node[main node] (s1) at (-0.5,-1.5) {$s_1$};
		\node[main node] (s2) at (4,-1.5) {$s_2$};
		
		\path[every node/.style={font=\sffamily\small,fill=white,inner sep=1pt}]
		(s0) edge [loop above] node {0.6} (s0)
		(s1) edge [loop left] node {0.6} (s1) 
		(s2) edge [loop right] node {0.6} (S2)
		(s0) edge [left] node {0.6} (s1)
		(s1) edge [bend left] node {0} (s0)
		(s0) edge [bend left] node {0} (s2)
		(s2) edge [left] node {0.6} (s0)
		(s1) edge [left] node {0.6} (s2)
		(s2) edge [bend left] node {0} (s1)
		;
	\end{tikzpicture}
	
}
\end{center}
\caption{The diagram of $\mathfrak{M}_1$ model.} \label{fig001}
\end{figure}
\begin{proof}
Let $\mathcal{A}=\{a\}$ and $ \mathcal{P}=\{p\}$  be the set of agents and atomic propositions respectively, and assume $\mathfrak{M}_1 = (S, r  , \pi )$ be a D\L L-model (See Figure \ref{fig001}), where $S= \{s_0,s_1,s_2\}$ such that for all $s_i,s_j\in S$ if $i\leq j$, then $r (s_i,s_j)=0.6$, otherwise $r (s_i,s_j)=0$. Let the valuation function $\pi$ is defined as $\pi(s_0, p) =0.8$, $\pi(s_1,p)=0.9$ and $\pi(s_2,p)=0.7$.
First, we obtain the value of $\neg B  \perp$ at state $s_0$:
$$\begin{array}{ll}
V_{s_0}(\neg B  \perp) & = 1- V_{s_0}(B  \perp),\\
& = 1-  \inf \{ \max \{0.4, 0\}, \max\{1,0\}, \max\{0.4 ,0\}\}, \\
& = 0.6.
\end{array} $$
For parts (2) and (3), after substituting $\varphi$ with $p$ at state $s_0$, we obtain:
$$\begin{array}{ll}
V_{s_0}(B  p \rightarrow B  B  p) & = \min\{1, 1-V_{s_0}(B  p)+V_{s_0}(B  B  p)\}, \\
& = \min \{ 1, 1- 0.8 + 0.7\}, \\
& = 0.9,
\end{array}$$

$$\begin{array}{ll}
V_{s_0}(\neg B  p \rightarrow B  \neg B  p) & = \min \{1, 1 - 1 + V_{s_0}(B  p) + V_{s_0}(B  \neg B  p)\},\\
& = \min \{ 1, 0.2 + 0.4 \}, \\
& = 0.6.
\end{array}$$
For part (4), let $\pi(s_i,q)=0.2$ for all $s_i\in S$. It's not hard to check that $V_{s_0}(B  q\rightarrow q) = 0.8$.
\end{proof}

\begin{lemma} \label{gb_structures_properties}
Let $\varphi$ be a D\L L-formula and $\mathfrak{M} = (S, r, \pi)$ be a D\L L-model.
\begin{enumerate}
\item  If $\mathfrak{M}$ is serial, then $\mathfrak{M} \vDash \neg B  \perp,$
\item If $\mathfrak{M}$ is reflexive, then $\mathfrak{M} \vDash B  \varphi \rightarrow \varphi,$
\item If $\mathfrak{M}$ is transitive, then $\mathfrak{M} \vDash B  \varphi \rightarrow B  B  \varphi$,
\item If $\mathfrak{M}$ is recognizable, then $\mathfrak{M} \vDash \neg B  \varphi \rightarrow B  \neg B  \varphi$.
\end{enumerate}
\end{lemma}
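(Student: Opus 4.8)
The plan is to reduce each of the four $\mathfrak{M}$-validity claims to a pointwise inequality on truth values. Using the clause $V_s(\varphi \rightarrow \psi) = \min\{1, 1 - V_s(\varphi) + V_s(\psi)\}$, validity of an implication $\varphi \rightarrow \psi$ in $\mathfrak{M}$ is equivalent to $V_s(\varphi) \leq V_s(\psi)$ for every $s \in S$; together with $V_s(\neg \varphi) = 1 - V_s(\varphi)$ and $V_s(\perp) = 0$ this turns all four statements into inequalities. Everything then hinges on manipulating $V_s(B\varphi) = \inf_{s' \in S} \max\{1 - r(s,s'), V_{s'}(\varphi)\}$, and on two elementary facts: that $\max$ distributes over $\inf$ against a constant, and that $1 - \min\{a,b\} = \max\{1-a, 1-b\}$.

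Parts (1) and (2) I expect to be immediate. For (1), since $V_s(\perp)=0$ we have $V_s(B\perp) = \inf_{s'}\max\{1 - r(s,s'), 0\} = \inf_{s'}(1 - r(s,s'))$; seriality supplies an $s'$ with $r(s,s') = 1$, forcing this infimum to $0$, hence $V_s(\neg B\perp) = 1$. For (2), reflexivity gives $r(s,s) = 1$, so the term indexed by $s' = s$ in the infimum defining $V_s(B\varphi)$ equals $\max\{0, V_s(\varphi)\} = V_s(\varphi)$; an infimum is bounded above by any of its terms, so $V_s(B\varphi) \leq V_s(\varphi)$, which is exactly what validity of $B\varphi \rightarrow \varphi$ requires.

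For (3), the plan is to prove the stronger fact $V_s(B\varphi) \leq \max\{1 - r(s,s'), V_{s'}(B\varphi)\}$ for each fixed $s'$, and then take the infimum over $s'$ to conclude $V_s(B\varphi) \leq V_s(BB\varphi)$. Expanding $V_{s'}(B\varphi)$ and using distributivity of $\max$ over $\inf$, I would rewrite the right-hand side as $\inf_{s''}\max\{1 - r(s,s'),\, 1 - r(s',s''),\, V_{s''}(\varphi)\}$. Transitivity, rephrased as $1 - r(s,s'') \leq \max\{1 - r(s,s'), 1 - r(s',s'')\}$, then shows each term of this infimum dominates the corresponding term $\max\{1 - r(s,s''), V_{s''}(\varphi)\}$ of $V_s(B\varphi)$, and taking infima finishes it. This is the step I expect to be the main obstacle, as it is the only one requiring the nested-infimum bookkeeping; the two rewriting facts above are what reduce it to routine.

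For (4), the key observation is that recognizability, i.e.\ $r(s,s') \leq 0.5$ and hence $1 - r(s,s') \geq 0.5$, forces $V_s(B\psi) \geq 0.5$ for every formula $\psi$ and every state $s$, since each term $\max\{1 - r(s,s'), V_{s'}(\psi)\}$ in the defining infimum is already at least $1 - r(s,s') \geq 0.5$. Applying this both to $\psi = \varphi$ and to $\psi = \neg B\varphi$ yields $V_s(B\varphi) \geq 0.5$ and $V_s(B\neg B\varphi) \geq 0.5$. Since validity of $\neg B\varphi \rightarrow B\neg B\varphi$ amounts to $1 - V_s(B\varphi) \leq V_s(B\neg B\varphi)$, that is $V_s(B\varphi) + V_s(B\neg B\varphi) \geq 1$, the two lower bounds of $0.5$ add up to give precisely the required inequality.
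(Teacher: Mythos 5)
Your proposal is correct and takes essentially the same approach as the paper's own proof: (1) and (2) via the witness term $\max\{1-r(s,s_0),V_{s_0}(\perp)\}$ resp.\ $\max\{1-r(s,s),V_s(\varphi)\}$ in the defining infimum, (3) via the rephrased transitivity inequality $1-r(s,s'')\leq\max\{1-r(s,s'),\,1-r(s',s'')\}$ combined with distributing $\max$ over $\inf$ against the constant $1-r(s,s')$, and (4) via the two bounds $V_s(B\varphi)\geq 0.5$ and $V_s(B\neg B\varphi)\geq 0.5$ forced by recognizability, summing to the required $1-V_s(B\varphi)\leq V_s(B\neg B\varphi)$. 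The only difference is cosmetic: you make explicit the $0.5$ lower bounds that the paper dismisses as ``easy to check.''
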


\begin{proof}
Let $s\in S$ be an arbitrary state.

(1): By definition we have 
$V_s(\neg B  \perp) = 1-\inf_{s'\in S} \max \{1-r (s,s'), V_{s'}(\perp)\}.$
If $\mathfrak{M}$ is serial, then there is a state $s_0\in S$ such that $r (s,s_0)=1$. So $\max\{1-r (s,s_0),V_{s'}(\perp)\}$ takes value zero, then $V_s(\neg B  \perp) = 1$.

(2):
By definition  
$V_s(B  \varphi) = \inf_{s' \in S}\max\{1-r (s,s'), V_{s'}(\varphi)\}$,
and we have
$$\max\{1-r (s,s), V_{s}(\varphi)\} \geq \inf_{s'\in S}\max\{1-r (s,s'), V_{s'}(\varphi)\}.$$
From reflexivity we have $r (s,s) = 1$, so 
$$\max\{0, V_{s}(\varphi)\} \geq \inf_{s'\in S}\max\{1-r (s,s'), V_{s'}(\varphi)\},$$
which means that
$V_{s}(\varphi) \geq V_s(B  \varphi)$.

(3):
If $\mathfrak{M}$ is transitive then by definition for all $a\in \mathcal{A}$ and all $s,s',s'' \in S$ we have $r (s,s'')\geq \min\{r (s,s'), r (s',s'')\}$. It follows that
\begin{equation}
1-r (s,s'') \leq \max \{1-r (s,s'), 1-r (s',s'')\}.
\end{equation}	 
Thus 
$$\max\{1-r (s,s''), V_{s''}(\varphi)\}\leq \max \{1-r (s,s'), \max\{1-r (s',s''), V_{s''}(\varphi)\}\}.$$
So, by applying infimum on both sides we have:
\begin{eqnarray*}
& \inf_{s''\in S}\max\{1-r (s,s''), V_{s''}(\varphi)\}&\leq \inf_{s''\in S}\max \{1-r (s,s'), \max\{1-r (s',s''), V_{s''}(\varphi)\}\}\\
& & =  \max\{1-r (s,s'), \inf_{s''\in S}\max\{1-r (s',s''), V_{s''}(\varphi)\}\}.
\end{eqnarray*}
So, 
\begin{equation} \label{eq008}
\inf_{s''\in S}\max\{1-r (s,s''), V_{s''}(\varphi)\} \leq \inf_{s' \in S} \max\{1-r (s,s'), \inf_{s''\in S}\max\{1-r (s',s''), V_{s''}(\varphi)\}\}.
\end{equation}

It is not hard to check that the left and right sides of \ref{eq008} are equal to $V_s(B  \varphi)$ and $V_s(B  B  \varphi)$, respectively, and thus $V_s(B  \varphi \rightarrow B  B  \varphi)=1.$ 

(4): Let $\mathfrak{M}$ be a recognizable model and $s\in S$ be an arbitrary state. We have
$$\begin{array}{ll}
V_s(\neg B  \varphi \rightarrow B  \neg B  \varphi) & = \min \{1, 1- V_s(\neg B  \varphi ) + V_s(B  \neg B  \varphi) \}\\
& = \min \{1,1-1+V_s(B  \varphi) + V_s(B  \neg B  \varphi)\}\\
& = \min\{1, V_s(B  \varphi)+V_s(B  \neg B  \varphi)\}.
\end{array}$$
So we just need to check that $V_s(B  \varphi)+V_s(B  \neg B  \varphi)\geq 1$. 
It is easy to check that $V_s(B  \varphi) \geq 0.5$ and $V_s(B  \neg B \varphi) \geq 0.5$. Thus $V_s(B  \varphi)+V_s(B  \neg B  \varphi)\geq 1$ and $V_s(\neg B  \varphi \rightarrow B  \neg B  \varphi)=1$.
\end{proof}

In the following we propose some axiom schemes and inference rules to introduce some axiomatic systems. Suppose that $\varphi$ and $\psi$ be D\L L-formulae.

\begin{itemize}
\item[(\L$_B$0)] All instances of tautologies in propositional \L ukasiewicz logic
\item[(\L$_B$1)] $(B  \varphi \, \& \, B (\varphi \rightarrow \psi)) \rightarrow (B  \psi)$
\item[(\L$_B$2)] $\neg B  \perp$
\item[(\L$_B$3)] $B  \varphi \rightarrow B  B  \varphi$
\item[(\L$_B$4)] $\neg B  \varphi \rightarrow B  \neg B  \varphi$
\item[(\L$_B$5)] $B  \varphi \rightarrow \varphi$
\item[(R$_{\text{MP}}$)] $\cfrac{\varphi \qquad \varphi \rightarrow \psi}{\psi}$
\item[(R$_\text{GB}$)] $\cfrac{\varphi}{B  \varphi}$ 
\end{itemize}

\LB{1} serves as the distribution axiom for belief. 
Axiom (\L$_B$2), denoted as \textbf{ D} in classical epistemic logic, ensures that agents act rationally, meaning each agent does not believe in any contradictions. 
The positive and negative introspection axioms, typically found in classical epistemic logic as axioms \textbf{4}, \textbf{5} (see \cite{Ditmarsch2008}), correspond to (\L$_B$3) and (\L$_B$4), respectively. The axiom (\L$_B$5) is similar to the axiom of truth \textbf{T} in classical epistemic logic.

Let \textbf{B\L} be our basic axiomatic system that is an extension of propositional \L ukasiewicz logic containing axioms \LB{0}, \LB{1} and inference rules (R$_{\text{MP}}$) and (R$_\text{GB}$). 
We can extend the system \textbf{B\L} by incorporating certain axioms from the aforementioned list. To refer to these extensions, we can use subscripts \textbf{D}, \textbf{4}, \textbf{5}, and \textbf{T}, which correspond to \LB{2}, \LB{3}, \LB{4}, and \LB{5}, respectively. For instance, $\textbf{B\L}_{\textbf{D45}}$ represents an axiomatic system obtained from \textbf{B\L} by adding \LB{2}, \LB{3}, and \LB{4}.


\begin{lemma} \label{gb_admissible_rules}
The inference rules (R$_{\text{MP}}$) and (R$_\text{GB}$) are sound, that is  if the premises of (R$_{{\text{MP}}}$) or (R$_{\text{GB}}$) are considered valid, then their conclusions can also be deemed valid.
\end{lemma}

\begin{proof}
It's obvious by the definition.
\end{proof}

\begin{definition}	\label{def_axiomatic_systems}
Let \textbf{A} shows the system \textbf{B\L} or $\textbf{B\L}_x$, where $x$ is a string that is a $k$-combination of \textbf{D}, \textbf{4}, \textbf{5}, and \textbf{T} ($k\in \{1,\ldots,4\}$). A derivation in \textbf{A} is a sequence of formulae like $\varphi_1, \cdots, \varphi_n=\varphi$ such that for all  $i\in \{1,\cdots,n\}$, either $\varphi_i$ is an axiom of \textbf{A} or it is obtained from some $\varphi_j$s  using (R$_\text{MP}$) or (R$_\text{GB}$), where $j\leq i$. Then we say that $\varphi$ is provable in \textbf{A} and denote it by $\vdash_\textbf{A}\varphi$.

Also, if $\Gamma$ is a set of formulae, and there is a sequence $\psi_1,\cdots, \psi_m=\varphi$ such that each $\psi_i$ is an axiom of \textbf{A} or is a member of $\Gamma$ or is obtained by applying (R$_{MP}$) and (R$_{GB}$) with this restriction that applying (R$_{GB}$) is not allowed for non-valid formulae $\psi_j\in \Gamma$, then we say that $\varphi$‌ is provable from $\Gamma$, denoted as $\Gamma \vdash_{\textbf{A}} \varphi$. It's important to note that we refrain from applying (R$_{GB}$) to non-valid formulas, as our goal is to avoid endorsing or believing on non-valid formulas.



\end{definition}

\begin{theorem}\label{soundness_pseudo-classical} \textbf{(Soundness)}
Let $\mathcal{M}$ be a class of D\L L-models and $x$ be a string which is a $k$-combination of \textbf{D}, \textbf{4}, \textbf{5}, and \textbf{T}, where $k \in \{1,\ldots, 4\}$.
\begin{enumerate}
\item If $\; \vdash_\textbf{B\L} \varphi$, then $\mathcal{M}\vDash \varphi$.
\item If  $\; \vdash_{\textbf{B\L}_\textbf{D}} \varphi$ and $\mathcal{M}$ is serial, then $\mathcal{M}\vDash \varphi$.
\item If  $\; \vdash_{\textbf{B\L}_\textbf{4}} \varphi$ and $\mathcal{M}$ is transitive, then $\mathcal{M}\vDash \varphi$.
\item If  $\; \vdash_{\textbf{B\L}_\textbf{5}} \varphi$ and $\mathcal{M}$ is recognizable, then $\mathcal{M}\vDash \varphi$.
\item If  $\; \vdash_{\textbf{B\L}_\textbf{T}} \varphi$ and $\mathcal{M}$ is reflexive, then $\mathcal{M}\vDash \varphi$.
\item If $\; \vdash_{\textbf{B\L}_{x}} \varphi$ and $\mathcal{M}$ has the properties of accessibility relations corresponding to the axioms of $x$, then $\mathcal{M} \vDash \varphi$.
\end{enumerate}
\end{theorem}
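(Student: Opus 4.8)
The plan is to argue by induction on the length of a derivation, after observing that soundness reduces to two facts: every axiom of the relevant system is $\mathcal{M}$-valid (provided $\mathcal{M}$ carries the frame properties attached to that system), and both inference rules send $\mathcal{M}$-valid premises to $\mathcal{M}$-valid conclusions. Since parts (1)--(5) are the special cases of part (6) in which $x$ is a single letter, and $\textbf{B\L}$ itself is the case where $x$ is empty and imposes no frame condition, I would prove (6) in full generality and read off (1)--(5) as corollaries. Concretely, fix a class $\mathcal{M}$ whose accessibility relations satisfy exactly the properties named by the letters of $x$, and suppose $\varphi_1,\dots,\varphi_n=\varphi$ is a derivation in $\textbf{B\L}_x$ in the sense of Definition \ref{def_axiomatic_systems}. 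I claim that each $\varphi_i$ is $\mathcal{M}$-valid, and establish this by strong induction on $i$.

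For the axiom cases I would dispatch each schema against results already in hand. Consider \LB{0}: a propositional \L ukasiewicz tautology evaluates to $1$ under every assignment of $[0,1]$-values to its propositional variables, and since $V_s$ interprets $\neg,\,\&,\,\rightarrow$ pointwise by exactly the standard \L ukasiewicz operations, substituting arbitrary D\L L$_B$-formulae for those variables and reading off their values $V_s(\cdot)\in[0,1]$ shows $V_s(\varphi_i)=1$ at every state $s$ of every $\mathfrak{M}\in\mathcal{M}$, so $\mathcal{M}\vDash\varphi_i$. The axiom \LB{1} is valid in every model by Proposition \ref{distribution_ax}(1), hence in particular $\mathcal{M}$-valid. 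For the modal axioms, a letter appears in $x$ exactly when $\mathcal{M}$ carries the matching frame condition, so: if $\textbf{D}\in x$ then $\mathcal{M}$ is serial and \LB{2} is $\mathcal{M}$-valid by Lemma \ref{gb_structures_properties}(1); if $\textbf{4}\in x$ then $\mathcal{M}$ is transitive and \LB{3} is valid by Lemma \ref{gb_structures_properties}(3); if $\textbf{5}\in x$ then $\mathcal{M}$ is recognizable and \LB{4} is valid by Lemma \ref{gb_structures_properties}(4); and if $\textbf{T}\in x$ then $\mathcal{M}$ is reflexive and \LB{5} is valid by Lemma \ref{gb_structures_properties}(2).

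For the rule cases, if $\varphi_i$ is obtained by (R$_{\text{MP}}$) or (R$_\text{GB}$) from earlier formulae, those earlier formulae are $\mathcal{M}$-valid by the induction hypothesis, and Lemma \ref{gb_admissible_rules} then gives that the conclusion $\varphi_i$ is $\mathcal{M}$-valid as well. Because we are proving $\vdash_{\textbf{B\L}_x}\varphi$ with no side hypotheses $\Gamma$, the restriction in Definition \ref{def_axiomatic_systems} forbidding (R$_\text{GB}$) on non-valid assumptions never comes into play. Taking $i=n$ yields $\mathcal{M}\vDash\varphi$, which is exactly (6); specializing the string $x$ then produces (1)--(5).

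The only step with any genuine content is the justification of \LB{0}: one must note that the pointwise readings of $\neg,\,\&,\,\rightarrow$ in the semantics coincide with the standard \L ukasiewicz operations on $[0,1]$, so that the classical soundness of \L{} over the standard MV-algebra can be imported verbatim at each state. Everything else is bookkeeping, since matching each modal axiom to its frame property is immediate from the statements of Lemma \ref{gb_structures_properties}, and preservation of validity under the two rules is Lemma \ref{gb_admissible_rules}. I therefore expect the write-up to be short, with the induction supplying the scaffolding and the cited lemmas supplying every nontrivial inequality.
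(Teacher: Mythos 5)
Your proposal is correct and follows essentially the same route as the paper, whose proof is a one-line appeal to Proposition \ref{distribution_ax}, Lemma \ref{gb_structures_properties}, and Lemma \ref{gb_admissible_rules}; your induction on derivation length and the justification of (\L$_B$0) via the standard MV-algebra semantics simply make explicit the scaffolding the paper leaves implicit. The only (harmless) nuance is that you invoke Lemma \ref{gb_admissible_rules} for preservation of $\mathcal{M}$-validity rather than plain validity, which the lemma's proof covers verbatim.
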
 

\begin{proof}
The proof is obtained straightforwardly by applying Proposition \ref{distribution_ax},  Lemma \ref{gb_structures_properties}, and Lemma \ref{gb_admissible_rules}.
\end{proof}


	\begin{definition}
	Let \textbf{A} be an axiomatic system. We say that a finite set $\{\varphi_1, \cdots, \varphi_n\}$  of D\L L-formulae is $\textbf{A}$-consistent or simply \textit{consistent}, if $\nvdash_{\textbf{A}} \neg (\varphi_1\,\&\, \cdots\,\&\, \varphi_n)$. If $\Gamma$ is an infinite set of D\L L-formulae and all its finite subsets are $\textbf{A}$-consistent, then $\Gamma$ is called $\textbf{A}$-consistent.
		
	An infinite consistent set  $\Phi$ of D\L L-formulae  is called \textit{maximal} if for all formulae  $\psi\notin\Phi$, the set $\Phi \cup \{\psi\}$  is not consistent.	
	An infinite set $\Phi$ of D\L L-formulae is called \textit{maximal and \textbf{A}-consistent} if the following conditions hold:
	\begin{enumerate}
		\item The set $\Phi$ is $\textbf{A}$-consistent.
		\item For all D\L L-formula $\psi\notin\Phi$, the set $\Phi \cup \{\psi\}$  is not \textbf{A}-consistent.
	\end{enumerate}
	
\end{definition}

%
%

\begin{remark}
	
	
	While we have incorporated the notion of maximality, it's essential to clarify that, unlike classical logic, here a maximal set does not necessarily encompass all formulas or their negations. It's conceivable that there may exist a formula for which neither the formula itself nor its negation is included in any maximal set. To illustrate this point, consider the expansion of the consistent set $\{\neg(\varphi \,\&\, \varphi), \neg(\neg \varphi \,\&\, \neg \varphi)\}$ to a maximal set using the method outlined in Lemma \ref{maximal}-i. Interestingly, the resulted set does not contain either $\varphi$ or $\neg \varphi$.

\end{remark}

\begin{lemma}\label{lemma_consistent}
Let \textbf{A} be an axiomatic system and $\Phi$ be a set of D\L L-formulae that is not $\textbf{A}$-consistent. For each set $\Psi$ which $\Phi \subseteq \Psi$  we have $\Psi$ is not  $\textbf{A}$-consistent.
\end{lemma}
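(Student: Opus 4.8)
The plan is to prove the contrapositive at the level of finite subsets, which is the natural way to handle the definition of $\textbf{A}$-consistency for infinite sets. Recall that by definition $\Phi$ is $\textbf{A}$-consistent exactly when every finite subset of $\Phi$ is $\textbf{A}$-consistent, where a finite set $\{\varphi_1,\dots,\varphi_n\}$ is $\textbf{A}$-consistent iff $\varphi_1\,\&\,\cdots\,\&\,\varphi_n$ is $\textbf{A}$-consistent, i.e. iff $\nvdash_{\textbf{A}}\neg(\varphi_1\,\&\,\cdots\,\&\,\varphi_n)$.

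First I would unpack the hypothesis. Since $\Phi$ is \emph{not} $\textbf{A}$-consistent, there must exist a finite subset $\Phi_0 = \{\varphi_1,\dots,\varphi_n\}\subseteq \Phi$ that is not $\textbf{A}$-consistent; that is, $\vdash_{\textbf{A}}\neg(\varphi_1\,\&\,\cdots\,\&\,\varphi_n)$. Now let $\Psi$ be any set with $\Phi\subseteq\Psi$. The key observation is simply that $\Phi_0\subseteq\Phi\subseteq\Psi$, so $\Phi_0$ is also a finite subset of $\Psi$. Since $\Phi_0$ is not $\textbf{A}$-consistent, $\Psi$ has a finite subset that is not $\textbf{A}$-consistent, and therefore $\Psi$ is not $\textbf{A}$-consistent by the definition applied to $\Psi$. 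This completes the argument, and one should present it for both the case where $\Phi$ is infinite (where witnessing a bad finite subset is exactly the negation of the definition) and, for completeness, the case where $\Phi$ is itself finite (where $\Phi_0$ may be taken to be $\Phi$ itself or any bad finite subset of it).

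There is essentially no hard part here: the statement is a monotonicity property of inconsistency, and the whole content is the transitivity of subset inclusion together with the observation that "not consistent" is witnessed by a finite subset. The only point requiring a modicum of care is to match the formal definition precisely, distinguishing the finite and infinite cases of the definition of $\textbf{A}$-consistency so that the phrase ``not $\textbf{A}$-consistent'' is interpreted correctly in each case; once that bookkeeping is done, the inclusion $\Phi_0\subseteq\Psi$ does all the work.
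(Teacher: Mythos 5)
Your proof is correct and follows essentially the same route as the paper's: extract a finite $\textbf{A}$-inconsistent subset $\Gamma\subseteq\Phi$ witnessing the inconsistency, observe $\Gamma\subseteq\Psi$, and conclude $\Psi$ is not $\textbf{A}$-consistent. Your extra remark about the finite-versus-infinite cases of the definition is reasonable bookkeeping the paper glosses over (note that when $\Psi$ itself is finite, concluding $\vdash_{\textbf{A}}\neg(\psi_1\,\&\,\cdots\,\&\,\psi_m)$ from the inconsistency of a proper subset strictly also uses conjunction weakening via (A2), (A3) and contraposition, a step both you and the paper leave implicit).
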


\begin{proof}
	It can be easily checked.
\end{proof}
\begin{lemma}\label{and_Luka}
	Let $\Gamma = \{\varphi_1, \cdots, \varphi_n\}$ be a set of formulae, then $\Gamma \vdash \varphi_1\,\&\,\cdots\,\&\,\varphi_n$. 
\end{lemma}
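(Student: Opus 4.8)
The plan is to argue by induction on $n$, where essentially all the content is a single auxiliary theorem of propositional \L ukasiewicz logic that lets us merge two available formulae into their strong conjunction; the induction itself is pure bookkeeping with $(R_{\text{MP}})$. Throughout, the argument works uniformly in any of the systems \textbf{A} of Definition \ref{def_axiomatic_systems}, since it uses only \LB{0} and $(R_{\text{MP}})$.

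First I would record the auxiliary fact that $\vdash \varphi \rightarrow (\psi \rightarrow (\varphi \,\&\, \psi))$ for all formulae $\varphi, \psi$. This follows from axiom (A5) of BL (an instance of \LB{0}): instantiating (A5) with $\chi := \varphi \,\&\, \psi$ gives
$$(\varphi \rightarrow (\psi \rightarrow (\varphi \,\&\, \psi))) \leftrightarrow ((\varphi \,\&\, \psi) \rightarrow (\varphi \,\&\, \psi)).$$
Since $(\varphi \,\&\, \psi) \rightarrow (\varphi \,\&\, \psi)$ is an instance of the provable scheme $\chi \rightarrow \chi$, the right-hand side is provable, and the left-to-right direction of the biconditional together with $(R_{\text{MP}})$ yields the claim.

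Next I would isolate the core step: if $\Gamma \vdash \varphi$ and $\Gamma \vdash \psi$, then $\Gamma \vdash \varphi \,\&\, \psi$. Indeed, concatenating the two derivations, appending the auxiliary theorem $\varphi \rightarrow (\psi \rightarrow (\varphi \,\&\, \psi))$, and applying $(R_{\text{MP}})$ twice---first with $\varphi$ to obtain $\psi \rightarrow (\varphi \,\&\, \psi)$, then with $\psi$ to obtain $\varphi \,\&\, \psi$---produces a valid derivation from $\Gamma$. Note that no application of $(R_{\text{GB}})$ is required, so the restriction forbidding $(R_{\text{GB}})$ on non-valid members of $\Gamma$ never comes into play here.

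Finally I would run the induction. For $n=1$ the claim $\{\varphi_1\} \vdash \varphi_1$ holds because $\varphi_1$ is a member of $\Gamma$. For the inductive step, assume $\{\varphi_1, \dots, \varphi_{n-1}\} \vdash \varphi_1 \,\&\, \cdots \,\&\, \varphi_{n-1}$; the same derivation witnesses $\Gamma \vdash \varphi_1 \,\&\, \cdots \,\&\, \varphi_{n-1}$, since enlarging the premise set only makes more formulae available, while $\Gamma \vdash \varphi_n$ is immediate. Reading the iterated conjunction as $\varphi_1 \,\&\, \cdots \,\&\, \varphi_n = (\varphi_1 \,\&\, \cdots \,\&\, \varphi_{n-1}) \,\&\, \varphi_n$ and applying the core step with $\varphi := \varphi_1 \,\&\, \cdots \,\&\, \varphi_{n-1}$ and $\psi := \varphi_n$ gives the conclusion. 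The only genuinely non-routine point is producing the auxiliary theorem via (A5); a minor point worth stating explicitly is the left-associative bracketing convention for the iterated $\&$, so that the inductive split is literally the formula appearing in the statement.
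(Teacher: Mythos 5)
Your proof is correct and essentially the paper's: both hinge on deriving the key theorem $\varphi \rightarrow (\psi \rightarrow (\varphi \,\&\, \psi))$ from an instance of (A5) and then applying $(R_{\text{MP}})$ twice (the paper discharges the right-hand side of its (A5) instance via (A3) and treats $n=2$ ``without loss of generality,'' while you use the identity $\chi \rightarrow \chi$ and spell out the routine induction). One cosmetic correction: since the provable formula $(\varphi \,\&\, \psi) \rightarrow (\varphi \,\&\, \psi)$ is the \emph{right}-hand side of the (A5) biconditional as written, you need its right-to-left direction, not left-to-right.
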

\begin{proof}

	Without loss of generality, we show that if $\Gamma = \{\varphi_1, \varphi_2\}$, then  $\Gamma \vdash\varphi_1\,\&\,\varphi_2$.
	\begin{align*}
		&(1)&& \Gamma\vdash (\varphi_2\,\&\, \varphi_1)\rightarrow (\varphi_1\,\&\, \varphi_2)&& (A3)\\
		&(2)&& \Gamma\vdash \varphi_2\rightarrow(\varphi_1\rightarrow (\varphi_1\,\&\, \varphi_2)) && \text{(1),(A5), MP}\\
		&(3)&& \Gamma\vdash\varphi_2&& \text{assumption } \varphi_2 \in \Gamma \\
		&(4)&& \Gamma\vdash\varphi_1&& \text{assumption } \varphi_1 \in \Gamma \\
		&(5)&& \Gamma\vdash\varphi_1\rightarrow (\varphi_1\,\&\, \varphi_2)&& \text{(2),(3),MP}\\
		&(6)&&\Gamma\vdash\varphi_1\,\&\, \varphi_2 && \text{(4),(5),MP}
	\end{align*}

\end{proof}

\begin{lemma}\label{maximal}
Let $\textbf{A}$ be an axiomatic system. 
\begin{itemize}
\item[(i)] Each $\textbf{A}$-consistent set $\Phi$ of D\L L-formulae can be extended to a maximal $\textbf{A}$-consistent set.
\item[(ii)] If $\Phi$ is a maximal  $\textbf{A}$-consistent set, then for all D\L L-formulae $\varphi$ and $\psi$:
\begin{enumerate}
	\item $\varphi\,\&\,\psi \in \Phi$ if and only if $\varphi \in \Phi$ and $\psi\in \Phi$,
	\item If $\varphi \in \Phi$ and $\varphi \rightarrow \psi \in \Phi$, then $\psi \in \Phi$,
	\item If $\Phi\vdash_{ } \varphi$, then $\varphi \in \Phi$,
	\item $\exists n \in \mathbb{N}\; \underbrace{\varphi\,\&\,\cdots\,\&\,\varphi}_{\text{n times}} \in \Phi$ or $\neg \underbrace{(\varphi\,\&\,\cdots\,\&\,\varphi)}_{\text{n times}} \in \Phi$. 
\end{enumerate}
\end{itemize}
\end{lemma}

\begin{proof}
\textbf{$(i)$:}	Let $\varphi_1, \varphi_2, \cdots$
 be an enumeration of all D\L L-formulae. We define a sequence of sets of formulas $\Phi =\Phi_1\subseteq \Phi_2 \subseteq \cdots \subseteq \Phi_i \subseteq \cdots$  as follows:
\begin{equation*}
	\forall i~~~~\Phi_{i+1} = \left\lbrace \begin{array}{lc}
		\Phi_i & \text{If } \Phi_i\cup \{\varphi_i\} \text{ is not } \textbf{A}\text{-consistent},\\
		\Phi_i \cup \{\varphi_i\} & \text{otherwise}.
	\end{array} \right. 
\end{equation*}
Let $\Phi_\omega = \bigcup_{i\geq 1} \Phi_i$. It is easy to see that $\Phi_\omega$ is \textbf{A}-consistent since otherwise there is a finite subset $\Gamma\subset \Phi_\omega$ such that $\Gamma$ is inconsistent, meanwhile for some $j \in \mathbb{N}$ we have $\Gamma = \Phi_j$ which is a \textbf{A}-consistent set, therefore $\Phi_\omega$ is $\textbf{A}$-consistent. Now suppose that $\Phi_\omega$ is not maximal, so there exists a formula $\varphi=\varphi_j$ such that $\varphi \notin \Phi_\omega$ and the set $\Phi_\omega \cup\{\varphi\}$ is a \textbf{A}-consistent set. 
If $\Phi_j\cup \{\varphi_j\}$ is not \textbf{A}-consistent,  then $\Phi_\omega \cup \{\varphi_j\}$ is not \textbf{A}-consistent by Lemma \ref{lemma_consistent} which is a contradiction. Otherwise $\Phi_{j+1}$  contains $\varphi_j$, and so  $\Phi_\omega$ contains $\varphi_j$ which has a contraction to the assumption $\varphi \notin \Phi_\omega$.
\vskip 0.3cm
\textbf{(ii)-1:}  Let $\varphi \, \&\, \psi \in \Phi$ and $\varphi \notin \Phi$. Thus $\Phi \cup \{\varphi\}$ is not \textbf{A}-consistent, by definition. So there is a finite set $\Gamma\subset \Phi$ such that $\Gamma \cup \{\varphi\}$ is not \textbf{A}-consistent. Let $\Gamma = \{\psi_1, \cdots, \psi_n\}$, hence by the inconsistency of $\Gamma \cup \{\varphi\}$ we have $\vdash_{\textbf{A}} \neg (\psi_1 \,\&\, \cdots \,\&\, \psi_n \,\&\, \varphi)$. Since $\Gamma\cup \{\varphi, \psi\}$ is not $\textbf{A}$-consistent, then we  have $\vdash_{\textbf{A}} \neg (\psi_1 \,\&\, \cdots \,\&\, \psi_n \,\&\, \varphi \,\&\, \psi)$, and since $\Gamma \cup \{\varphi\,\&\,\psi\}\subset \Phi$, then using Lemma \ref{lemma_consistent}, we obtain a contradiction to the  \textbf{A}-consistency of $\Phi$.

For other direction assume $\varphi\in \Phi$, $\psi \in \Phi$ and $\varphi \,\&\, \psi \notin \Phi$. Then $\Phi \cup \{\varphi\,\&\, \psi\} $ is not \textbf{A}-consistent, and there is a finite set  $\Gamma = \{\psi_1, \cdots, \psi_n\} \subset \Phi$ such that $\vdash_{\textbf{A}}\neg (\psi_1\,\&\, \cdots\,\& \,\psi_n\,\&\, \varphi \,\&\,\psi)$. But this states that the finite subset $\Gamma \cup \{\varphi,\psi\} \subset \Phi$ is not $\textbf{A}$-consistent. 
\\
\textbf{(ii)-2:}
Assume $\varphi \in \Phi$ and $\varphi \rightarrow \psi \in \Phi$. If $\psi \notin \Phi$, then $\Phi \cup \{\psi\}$ is not \textbf{A}-consistent and there exists a finite subset $\Gamma \subset \Phi$ such that $\Gamma \cup \{\psi\}$ is not \textbf{A}-consistent. Let $\Gamma = \{\psi_1, \cdots, \psi_n\}$, so we have:
\begin{equation}\label{eq_inconsistent}
	\vdash_{\textbf{A}} \neg (\psi_1\,\&\, \cdots, \,\&\, \psi_n \,\&\, \psi).
\end{equation} 
On the other hand,
it can be proved that from $\vdash_{\textbf{A}} \chi\rightarrow \chi$ and 
(\L9) we have 
$\vdash_{\textbf{A}}(\chi \rightarrow \chi) \,\&\, ((\varphi \,\&\, (\varphi \rightarrow \psi)) \rightarrow \psi)$
and by considering an instance of (\L10) and applying ($R_{MP}$) we obtain
$$\vdash_{\textbf{A}}\chi \,\&\, (\varphi \,\&\, (\varphi \rightarrow \psi)) \rightarrow (\chi \,\&\, \psi).$$
Then by replacing $\psi_1\,\&\, \cdots \,\&\,\psi_n$ instead of $\chi$ we have:
$$\vdash_{\textbf{A}}(\psi_1\,\&\,\cdots\,\&\, \psi_n \,\&\, \varphi \,\&\,(\varphi \rightarrow \psi)) \rightarrow (\psi_1 \,\&\, \cdots \,\&\, \psi_n \,\& \,\psi )$$
Now, by using $(\varphi \rightarrow \psi)\rightarrow (\neg \psi \rightarrow \neg \varphi)$ scheme, which is provable in \L ukasiewicz logic we have:
\begin{equation}\label{eq013} 
	\vdash_{\textbf{A}}\neg (\psi_1 \,\&\, \cdots \,\&\, \psi_n \& \psi ) \rightarrow \neg (\psi_1\,\&\,\cdots\,\&\, \psi_n \,\&\, \varphi \,\&\,(\varphi \rightarrow \psi)).
\end{equation}
By applying (R$_{\text{MP}}$) on \ref{eq_inconsistent}, \ref{eq013} we obtain $\vdash_{\textbf{A}} \neg (\psi_1\,\&\,\cdots\,\&\, \psi_n \,\&\, \varphi \,\&\,(\varphi \rightarrow \psi)) $. But this means that $\Gamma \cup \{\varphi, \varphi\rightarrow \psi\}\subset \Phi$ is inconsistent, which is a contradiction to the \textbf{A}-consistency of $\Phi$.\\
\textbf{(ii)-3:}  
We first show that if $\vdash \varphi$, then $\varphi \in \Phi$.
Let $\vdash \varphi$. 
First, note that we have $\vdash\neg  \varphi \leftrightarrow \perp$ since from the assumption we have $\vdash \neg\neg \varphi$ from (\L 11) and  (R$_{\text{MP}}$), then using (\L 15) we obtain $\vdash \neg \varphi \rightarrow \perp$. Also, from (A7) we have $\vdash \perp \rightarrow \neg \varphi$.

Now, for the sake of contradiction assume that $\varphi \notin \Phi$. If $\varphi \notin \Phi$, then $\Phi\cup \{\varphi\}$ is inconsistent, by definition of a maximal set. So there is a subset $\Gamma = \{\psi_1,\cdots, \psi_n\} \subset \Phi$, such that $\Gamma\cup \{\varphi\}$ is not consistent. Therefore 
$$\begin{array}{llll}
	(1) & & \vdash_{ }\neg (\psi_1\,\&\, \cdots \,\&\, \psi_n \,\&\, \varphi) & \text{inconsistency of } \Gamma \cup \{\varphi\}\\
	(2) & & \vdash_{ }\neg \psi_1 \veebar\cdots \veebar \neg\psi_n \veebar \neg \varphi & (1),(\L5), \text{MP} \\
	(3) & & \vdash_{ }\neg (\neg \psi_1 \veebar\cdots \veebar \neg\psi_n) \rightarrow \neg \varphi & (2), (\L7 ), \text{MP}\\
	(4) & & \vdash_{ }\neg (\neg \psi_1 \veebar\cdots \veebar \neg\psi_n) \rightarrow \perp & (3), \text{assumption and replacement, MP} \\
	(5) & & \vdash_{ }(\neg \neg \psi_1 \,\&\, \cdots \,\&\, \neg \neg \psi_n) \rightarrow \perp & (4), (\L6), \text{MP} \\
	(6)  & & \vdash_{ }(\psi_1 \,\&\, \cdots \,\&\, \psi_n) \rightarrow \perp & (5), (\L8), \text{replacement}, \text{MP}\\
	(7) & & \vdash_{ }\neg (\psi_1 \,\&\, \cdots \,\&\,\psi_n) & (6),(\L9), \text{MP}
\end{array}$$
Thus from $\vdash_{ } \neg (\psi_1 \,\&\, \cdots \,\&\,\psi_n)$, it follows that $\Gamma$ is not consistent. However, this contradicts the assumption of the consistency of $\Phi$.
By applying induction based on the length of the proof of $\varphi$ from $\Phi$, along with the previous observation and part (ii)-2, we can establish the desired statement. 

\textbf{(ii)-4:} Suppose that there is a D\L L-formula $\varphi$ such that for every $n\in \mathbb{N}$  $\underbrace{\varphi\,\&\,\cdots\,\&\,\varphi}_{\text{n times}}\notin \Phi$ and $\neg \underbrace{(\varphi\,\&\,\cdots\,\&\,\varphi)}_{\text{n times}} \notin \Phi$. From the maximality and consistency of $\Phi$, for all $n\in\mathbb{N}$ neither  $\Phi \cup \{\underbrace{\varphi\,\&\,\cdots\,\&\,\varphi}_{\text{n times}}\}$ nor $\Phi \cup \{\neg \underbrace{(\varphi\,\&\,\cdots\,\&\,\varphi)}_{\text{n times}}\}$ is \textbf{A}-consistent.
Assume  $\Gamma = \{\psi_1, \cdots, \psi_m\}$ is a finite subset of $\Phi$ such that   $\Gamma \cup \{\underbrace{\varphi\,\&\,\cdots\,\&\,\varphi}_{\text{n times}}\}$ is not \textbf{A}-consistent,
then we have the following deduction:
$$\begin{array}{llll}
	(1)& &\vdash_{\textbf{A}} \neg (\psi_1 \,\&\, \cdots \,\&\, \psi_m \,\&\,\underbrace{(\varphi\,\&\,\cdots\,\&\,\varphi)}_{\text{n times}}) & \text{inconsistency of } \Gamma\cup \{\underbrace{\varphi\,\&\,\cdots\,\&\,\varphi}_{\text{n times}}\}\\
	(2)& & \vdash_{\textbf{A}} \neg \psi_1 \veebar \cdots \veebar \neg \psi_m \veebar \neg \underbrace{(\varphi\,\&\,\cdots\,\&\,\varphi)}_{\text{n times}} & (1), (\text{\L} 5), (\text{R}_{\text{MP}}), (\text{R}_{\text{MP}})\\ 
	(3)& &\vdash_{\textbf{A}} \neg (\neg \psi_1 \veebar \cdots \veebar \neg \psi_m ) \rightarrow \neg \underbrace{(\varphi\,\&\,\cdots\,\&\,\varphi)}_{\text{n times}} & (2), (\text{\L}7) , (\text{R}_{\text{MP}})\\
	(4)& &\vdash_{\textbf{A}} (\neg\neg \psi_1 \,\&\, \cdots \,\&\,\neg\neg\psi_m ) \rightarrow \neg \underbrace{(\varphi\,\&\,\cdots\,\&\,\varphi)}_{\text{n times}}  & (3), (\text{\L}6), (\text{R}_{\text{MP}}) \\
	(5)& & \vdash_{\textbf{A}} (\psi_1 \,\&\, \cdots \,\&\,\psi_m ) \rightarrow \neg \underbrace{(\varphi\,\&\,\cdots\,\&\,\varphi)}_{\text{n times}} & (4), (\text{\L}11), (\text{R}_{\text{MP}}).
\end{array}$$
Thus from (ii)-3 we have $(\psi_1 \,\&\, \cdots \,\&\, \psi_m )\rightarrow \neg \underbrace{(\varphi\,\&\,\cdots\,\&\,\varphi)}_{\text{n times}} \in \Phi$, and
since $\Gamma=\{\psi_1, \cdots, \psi_m\}\subset \Phi$, then we have $(\psi _1 \,\&\, \cdots \,\&\, \psi_m) \in \Phi$, by (ii)-1. So from (ii)-2 we have $\neg \underbrace{(\varphi\,\&\,\cdots\,\&\,\varphi)}_{\text{n times}} \in \Phi$, which is a  contradiction with the assumption  $\neg \underbrace{(\varphi\,\&\,\cdots\,\&\,\varphi)}_{\text{n times}} \notin \Phi$.
\end{proof}

\begin{theorem}\label{consistent_e}
	Let \textbf{A} be an axiomatic system, $\Phi$ be an  \textbf{A}-consistent set of D\L L formulae and $\varphi$ be a D\L L-formula such that 
	$\Phi \nvdash_{\textbf{A}} \varphi$. If $\Phi^{*} = \Phi\cup \{\neg \varphi\}$, then $\Phi^{*}$ is \textbf{A}-consistent.
\end{theorem}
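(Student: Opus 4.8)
The plan is to argue by contradiction. Suppose that $\Phi^{*}=\Phi\cup\{\neg\varphi\}$ is \emph{not} $\textbf{A}$-consistent; I will show that then $\Phi\vdash_{\textbf{A}}\varphi$, contradicting the hypothesis $\Phi\nvdash_{\textbf{A}}\varphi$. By the definition of consistency, the inconsistency of $\Phi^{*}$ produces a finite subset of $\Phi^{*}$ whose $\&$-conjunction is refutable in $\textbf{A}$. Since $\Phi$ is $\textbf{A}$-consistent, every finite subset of $\Phi$ by itself is $\textbf{A}$-consistent, so (using Lemma \ref{lemma_consistent}) this offending finite subset cannot lie entirely inside $\Phi$; it must contain $\neg\varphi$. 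Hence there is a finite $\Gamma=\{\psi_1,\dots,\psi_n\}\subseteq\Phi$ with
$$\vdash_{\textbf{A}}\neg(\psi_1\,\&\,\cdots\,\&\,\psi_n\,\&\,\neg\varphi).$$
If no $\psi_i$ occurs, i.e.\ $\{\neg\varphi\}$ is already inconsistent, then $\vdash_{\textbf{A}}\neg\neg\varphi$ and so $\vdash_{\textbf{A}}\varphi$ by \L11, which gives $\Phi\vdash_{\textbf{A}}\varphi$ immediately.

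The key step is to transform this refutation into a \emph{provable implication} purely inside propositional \L ukasiewicz logic. Writing $\chi=\psi_1\,\&\,\cdots\,\&\,\psi_n$, I would apply \L5 to rewrite $\neg(\chi\,\&\,\neg\varphi)$ as $\neg\chi\veebar\neg\neg\varphi$, then \L11 to replace $\neg\neg\varphi$ by $\varphi$, and finally \L7 together with \L11 to pass from $\neg\chi\veebar\varphi$ to $\chi\rightarrow\varphi$. Using the replacement properties (\L12)--(\L14) to carry the provable equivalences through, this yields
$$\vdash_{\textbf{A}}(\psi_1\,\&\,\cdots\,\&\,\psi_n)\rightarrow\varphi.$$

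To finish, I would invoke Lemma \ref{and_Luka}: since $\Gamma\subseteq\Phi$, its members are available as assumptions, so $\Phi\vdash_{\textbf{A}}\psi_1\,\&\,\cdots\,\&\,\psi_n$, that is $\Phi\vdash_{\textbf{A}}\chi$. Because $\chi\rightarrow\varphi$ is a theorem of $\textbf{A}$, it may be inserted into any derivation from $\Phi$, and a single application of (R$_{\text{MP}}$) then gives $\Phi\vdash_{\textbf{A}}\varphi$, the desired contradiction. It is worth noting that the entire derivation uses only (R$_{\text{MP}}$) and theorems of $\textbf{A}$ and never applies (R$_{\text{GB}}$), so it respects the restriction in Definition \ref{def_axiomatic_systems} forbidding generalization on non-valid members of $\Phi$.

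The only genuinely delicate points are the two bookkeeping steps: first, justifying that the inconsistent finite subset really contains $\neg\varphi$ (which is exactly where the $\textbf{A}$-consistency of $\Phi$ and Lemma \ref{lemma_consistent} are used), and second, carrying out the rewriting $\neg(\chi\,\&\,\neg\varphi)\rightsquigarrow(\chi\rightarrow\varphi)$ correctly through the \L ukasiewicz equivalences. Neither presents a real obstacle once the relevant schemes \L5, \L7 and \L11 are lined up; the substance of the lemma is simply the combination of this propositional rewriting with Lemma \ref{and_Luka} and modus ponens.
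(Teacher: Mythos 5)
Your proof is correct and follows essentially the same route as the paper's own: extract a finite subset $\Gamma=\{\psi_1,\dots,\psi_n\}\subseteq\Phi$ with $\vdash_{\textbf{A}}\neg(\psi_1\,\&\,\cdots\,\&\,\psi_n\,\&\,\neg\varphi)$, rewrite this via \L5, \L7 and \L11 into $\vdash_{\textbf{A}}(\psi_1\,\&\,\cdots\,\&\,\psi_n)\rightarrow\varphi$, and then combine Lemma \ref{and_Luka} with (R$_{\text{MP}}$) to conclude $\Phi\vdash_{\textbf{A}}\varphi$, contradicting $\Phi\nvdash_{\textbf{A}}\varphi$. Your extra bookkeeping --- justifying that the offending finite set must contain $\neg\varphi$ via Lemma \ref{lemma_consistent}, treating the degenerate case $n=0$, and noting that the derivation never invokes (R$_{\text{GB}}$) --- is sound and in fact slightly more careful than the paper's write-up, but it does not change the argument.
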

\begin{proof}
	For the sake of contradiction suppose that $\Phi^{*}$ is not \textbf{A}-consistent. So there exists a finite subset $\Gamma = \{\varphi_1, \cdots, \varphi_n\}\subseteq\Phi$ such that $\Gamma \cup \{\neg \varphi\}$ is not \textbf{A}-consistent, then we have:
	\begin{align}
		\nonumber	& \vdash_{\textbf{A}} \neg (\varphi_1\,\&\, \cdots \,\&\, \varphi_n \,\&\, \neg \varphi)  &  \\
		\nonumber		& \vdash_{\textbf{A}} \neg \varphi_1 \veebar \cdots \veebar \neg \varphi_n \veebar \neg \neg \varphi & \\
		\nonumber		& \vdash_{\textbf{A}} \neg (\neg \varphi_1 \veebar \cdots \veebar \neg \varphi_n) \rightarrow \varphi & \\
		\label{eq_008}		& \vdash_{\textbf{A}} (\varphi_1 \,\&\, \cdots \,\&\, \varphi_n) \rightarrow \varphi &
	\end{align}
	Also, by Lemma \ref{and_Luka}, from $\Gamma\subset \Phi$ we have $\Phi \vdash_{\textbf{A}} (\varphi_1 \,\&\, \cdots \,\&\, \varphi_n) $.
	Thus from \ref{eq_008}, we have $\Phi \vdash_{\textbf{A}} \varphi$,
	which is a contradiction to $\Phi \nvdash_{\textbf{A}} \varphi$.
	
\end{proof}

\begin{theorem} \label{model_e}
Let $\Phi$ be a \textbf{B\L}-consistent set and $\Phi \vdash_{\textbf{B\L}} \varphi$, where $\varphi$ is a D\L L-formula. Then there is a D\L L-model $\mathfrak{M}$ and a state $s$ such that $V_{s}^{\mathfrak{M}}(\varphi) = 1$. 
\end{theorem}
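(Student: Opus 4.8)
The plan is to build a canonical (term) model whose states are the maximal \textbf{B\L}-consistent sets of D\L L$_B$-formulae, following the standard Henkin-style construction for completeness theorems. First I would invoke Theorem \ref{consistent_e} together with the hypothesis $\Phi \vdash_{\textbf{B\L}} \varphi$: since $\varphi$ is provable from $\Phi$, the set $\Phi$ is itself consistent and I can locate a consistent set that forces $\varphi$ to hold at a distinguished point. Concretely, I would first argue that $\Phi \cup \{\neg\varphi\}$ is \emph{not} \textbf{B\L}-consistent (otherwise $\Phi \nvdash_{\textbf{B\L}} \varphi$ by the contrapositive of the deduction machinery), and then extend $\Phi$ to a maximal \textbf{B\L}-consistent set $\Phi^\ast$ using Lemma \ref{maximal}(i); this $\Phi^\ast$ will serve as the state $s$ at which I must show $V_s(\varphi)=1$.

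Next I would define the canonical model $\mathfrak{M} = (S, r, \pi)$ by taking $S$ to be the collection of all maximal \textbf{B\L}-consistent sets, setting $\pi(w,p)$ to be the truth value that the many-valued maximal set $w$ assigns to the atom $p$, and defining the fuzzy accessibility relation $r(w,w')$ in terms of the belief formulae in $w$ (roughly, $r(w,w')$ should be governed by the infimum condition $V_w(B\varphi) = \inf_{w'} \max\{1-r(w,w'), V_{w'}(\varphi)\}$ that the semantics demands). The central technical device is a \emph{Truth Lemma}: for every D\L L$_B$-formula $\psi$ and every state $w\in S$, the canonically assigned value $V_w(\psi)$ agrees with the value that the maximal consistent set $w$ dictates. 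This is proved by induction on the structure of $\psi$, where the propositional connectives $\neg, \&, \rightarrow$ are handled using the \L ukasiewicz truth functions together with Lemma \ref{maximal}(1)--(4), and the crucial belief case $\psi = B\chi$ requires verifying that the canonical $r$ reproduces the infimum semantics, using the distribution axiom \LB{1} and the generalization rule (R$_\text{GB}$).

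The main obstacle will be the belief case of the Truth Lemma, because the valuations are genuinely $[0,1]$-valued rather than two-valued: I cannot simply read off membership of $B\chi$ from a set, but must reconstruct a real number as an infimum over accessible worlds. The delicate direction is showing that the canonically defined $r$ is rich enough that the infimum $\inf_{w'}\max\{1-r(w,w'), V_{w'}(\chi)\}$ actually attains (or approaches) the value prescribed by $w$; this typically needs an existence lemma producing, for each target value and each $\chi$, a maximal consistent witness world $w'$ with the right value of $\chi$ and the right accessibility degree. Here the fact that \textbf{B\L} contains only \LB{0} and \LB{1} (no seriality, transitivity, or introspection axioms) means no frame conditions need to be verified, which simplifies matters; the price is that the accessibility relation must be defined purely from the belief fragment, and I would extract $r(w,w')$ as something like $\inf\{1 - (V_w(B\chi) \ominus V_{w'}(\chi))\}$ or an equivalent supremum/infimum over the belief formulae, then check the two inequalities matching $V_w(B\chi)$ against the semantic infimum.

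Finally, having established the Truth Lemma, the conclusion is immediate: apply it at the state $s = \Phi^\ast$ to the formula $\varphi$. Since $\Phi \subseteq \Phi^\ast$ and $\Phi \vdash_{\textbf{B\L}} \varphi$, closure of maximal consistent sets under provable consequence (Lemma \ref{maximal}(ii)-2,3) forces $\varphi$ to receive value $1$ in $\Phi^\ast$, and the Truth Lemma transfers this to $V_s^{\mathfrak{M}}(\varphi)=1$, which is exactly the desired witness model and state.
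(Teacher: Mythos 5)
Your overall skeleton (canonical model over maximal consistent sets plus a truth lemma) matches the paper's, but there is a genuine gap at exactly the point you flag as ``the main obstacle,'' and the paper avoids that obstacle by a design decision your proposal misses: the canonical model is \emph{crisp}. The paper defines $r^c(s_\Phi,s_\Psi)=1$ iff $\Phi\backslash B \subseteq \Psi$ (and $0$ otherwise) and $\pi^c(s_\Phi,p)=1$ iff $p\in\Phi$, so every value occurring in the canonical model lies in $\{0,1\}$, and the truth lemma collapses to the two-valued statement $\psi\in\Phi^*$ iff $V_{s_{\Phi^*}}^{\mathfrak{M}^c}(\psi)=1$, proved by an essentially classical induction; the belief case reduces to showing that $\Phi^*\backslash B \cup\{\neg\psi\}$ is inconsistent and then deriving $B\psi\in\Phi^*$ from a finite inconsistent subset via \LB{1}, (R$_\text{GB}$) and propositional \L-manipulations. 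This is legitimate because the theorem only asks for \emph{some} D\L L-model with $V_s(\varphi)=1$, and crisp models are D\L L-models. Moreover, with the paper's syntactic notion of consistency, Lemma \ref{maximal}(ii)-4 forces every maximal consistent set to contain exactly one of $\varphi$, $\neg\varphi$, so such a set is intrinsically Boolean: there is no canonical many-valued assignment ``that the maximal set $w$ dictates,'' contrary to what your formulation of the truth lemma presupposes.

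Because of this, your genuinely $[0,1]$-valued construction is not merely harder but unworkable as sketched: $\pi(w,p)$ as ``the truth value $w$ assigns to $p$'' is never defined (extracting real values from maximal consistent sets would require MV-algebraic representation machinery that is nowhere available in this setting); your candidate $r(w,w')=\inf\{1-(V_w(B\chi)\ominus V_{w'}(\chi))\}$ is circular, since it presupposes the very valuations the truth lemma is supposed to deliver; and the existence lemma you defer --- producing witness worlds realizing a prescribed infimum --- is precisely where infinite-valued modal \L ukasiewicz logic resists this kind of argument, since infima over $[0,1]$ need not be attained (compare the non-axiomatizability result of Vidal cited in the paper). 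A smaller inaccuracy: you invoke Theorem \ref{consistent_e} to conclude that $\Phi\cup\{\neg\varphi\}$ is inconsistent from $\Phi\vdash_{\textbf{B\L}}\varphi$, but that theorem gives only the converse direction ($\Phi\nvdash_{\textbf{B\L}}\varphi$ implies consistency of $\Phi\cup\{\neg\varphi\}$), so the step you actually need is the unproblematic one you state afterwards, namely that a derivation of $\varphi$ from $\Phi$ lifts into $\Phi^*$ by Lemma \ref{maximal}(ii). Your opening and closing moves are fine --- indeed more explicit than the paper's --- but the core belief case is missing in your route and only goes through after the crisp collapse.
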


\begin{proof}
	The desired model $\mathfrak{M}= (S , r  , \pi )$  is defined as follows:
	\begin{align*}
		& S  = \{ s_{\Phi}\,|\, \Phi \text{ is a maximal and consistent set of formulae}\} \\
		& r   (s_\Phi , s_\Psi) = \left\lbrace \begin{array}{lc}
			1 & \Phi\backslash B  \subseteq \Psi \\
			0 & \text{otherwise}
		\end{array}\right.; \qquad \Phi \backslash B  \stackrel{def}{=} \{\varphi \,|\, B  \varphi \in \Phi\}~\text{and}~s_\Phi ,~ s_\Psi \in S\\
		& \pi  (s_{\Phi}, p) = \left\lbrace \begin{array}{lc}
			1 & p\in \Phi \\
			0 & \neg p \in \Phi \\
			0.5 & \text{otherwise}
		\end{array} \right.; \qquad p\in \mathcal{P}.
	\end{align*}
	%

Let $\varphi$ be a D\L L-formula. By induction on the complexity of $\varphi$ we prove that for each maximal and \textbf{B\L}-consistent set  $\Phi^*$:
\begin{equation} \label{eq014}
	V_{s_{\Phi^*}}^{\mathfrak{M}}(\varphi) = \left\lbrace 
	\begin{array}{lll}
		1 &  & \varphi\in \Phi^* \\
		0 &  & \neg \varphi\in \Phi^* \\
		0.5 &  & \text{otherwise}
	\end{array}
	\right.
\end{equation}


where $s_{\Phi^*}\in S $.
Note that \textbf{B\L}-consistent set $\Phi$ has a maximal and \textbf{B\L}-consistent extension $\Psi$ using Lemma \ref{maximal}. Consequently, it follows that $V_{s_{\Psi}}^{\mathfrak{M}}(\varphi)=1$ according to equation \ref{eq014}, which aligns with the intended assertion.

Let $\Phi^*$ be a maximal and \textbf{B\L}-consistent set. 
The base step is obvious by the definition of $\pi $.  For the induction step, we have the following cases:
\\
\textbf{case 1:} $\varphi = \neg \psi$. 
If $\neg \psi \in \Phi^*$, then by induction on $\psi$ we have $V_{s_{\Phi^*}}^{\mathfrak{M}}(\psi) =0$, and so by definition $V_{s_{\Phi^*}}^{\mathfrak{M}}(\neg \psi)=1$.

Now assume $\neg \psi \notin \Phi^*$. By induction on $\psi$ we obtain $V_{s_{\Phi^*}}^{\mathfrak{M}}(\psi) =1$, and $V_{s_{\Phi^*}}^{\mathfrak{M}}(\psi) =0.5$, if $\psi \in \Phi^*$ and $\psi \notin \Phi^*$, respectively.
For the other side we have
\begin{equation}\label{eq_000}
	V_{s_{\Phi^*}}^{\mathfrak{M}}(\neg \psi)=1 \stackrel{\text{def.}}{\Rightarrow}  V_{s_{\Phi^*}}^{\mathfrak{M}}(\psi) =0 \stackrel{\text{Induction}}{\Rightarrow}  \neg\psi \in \Phi^*.
\end{equation}\\
\textbf{case 2:} $\varphi = \psi \,\&\, \chi$. We have:
$$\psi \,\&\, \chi \in \Phi^* \stackrel{\text{Lemma } \ref{maximal}}{\iff} \psi \in \Phi^*, \chi \in \Phi^*\stackrel{\text{induction}}{\iff} V_{s_{\Phi^*}}^{\mathfrak{M}} (\psi)=1= V_{s_{\Phi^*}}^{\mathfrak{M}}(\chi) \Leftrightarrow V_{s_{\Phi^*}}^{\mathfrak{M}}(\psi \,\&\, \chi)=1. $$
The case $\psi\,\&\,\chi \notin \Phi^*$ has a similar argument.\\
\textbf{case 3:} $\varphi = \psi \rightarrow \chi$.
First assume  $V_{s_{\Phi^*}}^{\mathfrak{M}}(\psi\rightarrow \chi)=1$.
We have six possible cases in which $V_{s_{\Phi^*}}^{\mathfrak{M}}(\psi) \leq V_{s_{\Phi^*}}^{\mathfrak{M}}(\chi)$. If $V_{s_{\Phi^*}}^{\mathfrak{M}}(\chi)=1$, then by induction hypothesis we have $\chi \in \Phi^*$, and so using instances of (\L1) and Lemma \ref{maximal} we obtain $\psi\rightarrow\chi \in \Phi^*$. Similarly, if
 $V_{s_{\Phi^*}}^{\mathfrak{M}}(\psi)=0$, then by induction hypothesis we have $\neg \psi\in \Phi^*$, then using instances of (\L1), (\L3) and Lemma \ref{maximal} we obtain $\psi\rightarrow\chi \in \Phi^*$. The only remaining case is when $V_{s_{\Phi^*}}^{\mathfrak{M}}(\psi) = V_{s_{\Phi^*}}^{\mathfrak{M}}(\chi)=0.5$. By induction we have $\psi,\neg \chi \notin \Phi^*$, so using Lemma  \ref{maximal} we have $\psi\,\&\,\neg\chi \notin \Phi^*$, and it can be easily obtained that $\neg (\neg \psi \rightarrow \chi) \notin \Phi^*$. By applying Theorem \ref{consistent_e} the set $\Phi^*\cup \{\neg\neg (\psi\rightarrow\chi)\}$ is consistent, and by the maximality of  $\Phi^*$ we have $\neg\neg (\psi\rightarrow\chi) \in \Phi^*$. The desired statement can be derived using  (\L10) and Lemma \ref{maximal}.

For the other direction, suppose that $\psi\rightarrow\chi \in \Phi^*$. If $\chi \in \Phi^*$, by induction hypothesis the desired statement can be obtained easily. If $\psi\in \Phi^*$, then using Lemma \ref{maximal} we obtain $\chi \in \Phi^*$ and again the desired statement can be concluded. So assume $\chi \notin \Phi^*$ and $\neg \psi\in \Phi^*$, then by induction hypothesis on $\psi$ we have $V_{s_{\Phi^*}}^{\mathfrak{M}}(\psi)=0$,  thus $V_{s_{\Phi^*}}^{\mathfrak{M}}(\psi\rightarrow\chi)=1$ by definition. Now, let $\neg \psi\notin \Phi^*$. If $\neg \chi \notin \Phi^*$, then by induction we have $V_{s_{\Phi^*}}^{\mathfrak{M}}(\psi)=V_{s_{\Phi^*}}^{\mathfrak{M}}(\chi)=0.5$ that concludes the desired statement. If $\neg \chi \in \Phi^*$, then using (\L3) and Lemma \ref{maximal} we will have $\neg \psi\in \Phi^*$ which is contradiction with our assumption.

If $\psi\rightarrow\chi\notin \Phi^*$, then $\Phi^*\cup \{\psi\rightarrow\chi\}$ is inconsistent.
So there is a finite set $\Gamma = \{\psi_1,\cdots,\psi_n\}\subset \Phi^*$ such that $\Gamma \cup \{\psi\rightarrow\chi\}$ is inconsistent. Thus we have
$$\begin{array}{lll}
	(1) & \vdash_{\textbf{B\L}} \neg ( \psi_1\,\&\, \cdots \,\&\, \psi_n\,\&\,(\psi\rightarrow\chi)) & \text{Assumption}\\
(2) & \vdash_{\textbf{B\L}} \neg \psi_1 \veebar \cdots \veebar \neg \psi_n \veebar \neg (\psi\rightarrow\chi) & \text{1, (\L5)}\\
(3) & \vdash_{\textbf{B\L}} \neg (\neg  \psi_1 \veebar \cdots \veebar \neg \psi_n ) \rightarrow \neg (\psi\rightarrow\chi) & \text{2, (\L7)}\\
(4) & \vdash_{\textbf{B\L}} (\neg\neg \psi_1 \,\&\,\cdot\,\&\, \neg\neg \psi_n)\rightarrow \neg (\psi\rightarrow\chi) & \text{3, (\L5)}\\
(5) & \vdash_{\textbf{B\L}} ( \psi_1 \,\&\,\cdot\,\&\, \psi_n)\rightarrow \neg (\psi\rightarrow\chi) & \text{4, (\L11), replacement}\\
(6) & \vdash_{\textbf{B\L}} ( \psi_1 \,\&\,\cdot\,\&\, \psi_n)\rightarrow (\psi \,\&\,\neg \chi)& \text{5, (\L6), (\L 7), replacement}\\
\end{array}$$

So since $\psi_1\,\&\,\cdots\,\&\,\psi_n \in \Phi^*$, considering (6) above and using Lemma \ref{maximal} we conclude that  $\psi \,\&\,\neg \chi\in \Phi^*$. Applying Lemma \ref{maximal}-i we can deduce that $\psi \in \Phi^*$ and $\neg \chi \in \Phi^*$. By induction, we can therefore establish  $V_{s_{\Phi^*}}^{\mathfrak{M}}(\psi)=1$ and $V_{s_{\Phi^*}}^{\mathfrak{M}}(\chi)=0$. Thus $V_{s_{\Phi^*}}^{\mathfrak{M}}(\psi\rightarrow\chi)=0$ as desired.\\
\\
\textbf{case 4:} $\varphi = B  \psi$.  
First assume $V_{s_{\Phi^*}}^{\mathfrak{M}}(B  \psi) = 1$, so $\inf_{s'\in S }\max\{1-r (s_{\Phi^*}, s'), V_{s' }^{\mathfrak{M}} (\psi)\} = 1$, by definition.
Thus for all $s'\in S  $ we have $\max\{1-r (s_{\Phi^*}, s'), V_{s' }^{\mathfrak{M}} (\psi)\}=1$. 
\\
\textbf{Claim:} $\Phi^*\backslash B  \cup \{\neg \psi\}$ is not \textbf{B\L}-consistent.
\\
\textit{Proof of the claim:} For the sake of contradiction suppose that $\Phi^*\backslash B  \cup \{\neg \psi\}$ is \textbf{B\L}-consistent. Then by lemma \ref{maximal},
we can establish the existence of a maximal and \textbf{B\L}-consistent extension $\Psi$ for the set 
  $\Phi^*\backslash B  \cup \{\neg \psi\}$.
  Since $\Phi^*\backslash B  \cup \{\neg \psi\}\subset \Psi$ we have $r (s_{\Phi^*}, s_{\Psi})=1$, by definition of the model. Furthermore $\neg \psi \in \Psi$, thus by induction hypothesis we have $V_{s_{\Psi}}^{\mathfrak{M}}(\psi) = 0$. Since $\max \{1-r (s_{\Phi^*}, s_{\Psi}), V_{s_{\Psi}}^{\mathfrak{M}}(\psi)\}=0$ we have a contradiction with our assumption in case 4.
\hfill $\blacksquare_{\text{Claim}}$

Therefore, according to the Claim, there exists a finite subset $\Gamma = \{\psi_1 , \cdots, \psi_n, \neg \psi\} \subset \Phi^*\backslash B  \cup \{\neg \psi\}$ which is not \textbf{B\L}-consistent.  So we have:
$$\begin{array}{llll} 
	(1)& &\vdash_{\textbf{B\L}} \neg (\psi_1 \,\&\, \cdots \,\&\, \psi_n \,\&\,\neg \psi) & \text{inconsistency of } \Gamma\\
	(2)& & \vdash_{\textbf{B\L}} \neg \psi_1 \veebar \cdots \veebar \neg \psi_n \veebar \neg \neg \psi & (1), (\text{\L} 5)\\ 
	(3)& &\vdash_{\textbf{B\L}} \neg (\neg \psi_1)\rightarrow(\neg \psi_2 \veebar \cdots \veebar \neg \psi_n \veebar \neg \neg \psi) & (2), (\text{\L}7) \\
	(4)& &\vdash_{\textbf{B\L}} \psi_1\rightarrow(\neg \psi_2 \veebar \cdots \veebar \neg \psi_n \veebar \psi)  & (3), \text{replacement}\\
	(5)& & \vdots & \\
	(6)& & \vdash_{\textbf{B\L}} \psi_1 \rightarrow (\psi_2 \rightarrow \cdots (\psi_n\rightarrow \psi))  & \text{similar procedure for all } \psi_i s\\
	(7)& &\vdash_{\textbf{B\L}} B ( \psi_1 \rightarrow (\psi_2 \rightarrow \cdots (\psi_n\rightarrow \psi))) & (6), (R_{GB})\\
	(8)& &\multicolumn{2}{l}{\vdash_{\textbf{B\L}} (B  \psi_1 \,\&\, B ( \psi_1 \rightarrow (\psi_2 \rightarrow \cdots (\psi_n\rightarrow \psi))))\rightarrow (B (\psi_2 \rightarrow \cdots (\psi_n\rightarrow \psi)))} \\
	& & &\text{instances of }(\text{\L}_B1) \\
\end{array}$$
From (7) above we have $B ( \psi_1 \rightarrow (\psi_2 \rightarrow \cdots (\psi_n\rightarrow \psi))) \in \Phi^*$, by applying Lemma \ref{maximal}. Meanwhile since $\psi_1, \cdots, \psi_n \in \Phi^*\backslash B  $, we have $B  \psi_1, \cdots, B  \psi_n \in \Phi^*$. So by (8) and statements
$B  \psi_1 \in \Phi^*$ and $B ( \psi_1 \rightarrow (\psi_2 \rightarrow \cdots (\psi_n\rightarrow \psi))) \in \Phi^*$ we have:
$$B (\psi_2 \rightarrow \cdots (\psi_n\rightarrow \psi))\in\Phi^*,$$
using Lemma \ref{maximal}.
By similar argument, finally we obtain $B  \psi \in \Phi^*$.


For the other direction, assume $B  \psi \in \Phi^*$, thus $\psi \in \Phi^*\backslash B $ by definition. Let $\Psi$ be an arbitrary maximal and consistent set. If $\Phi^*\backslash B  \subset \Psi$, then $r  (s_{\Phi^*}, s_{\Psi})=1$, and so since $\psi \in \Psi$, then by induction hypothesis we have $V_{s_{\Psi}}^{\mathfrak{M}}(\psi)=1$. Therefore $\max\{1-r  (s_{\Phi^*}, s_{\Psi}), V_{s_{\Psi}}^{\mathfrak{M}}(\psi)\}=1$. Otherwise, if $\Phi^*\backslash B  \not\subset \Psi$ then $r (s_{\Phi^*}, s_{\Psi})=0$, and again we have $\max\{1-r  (s_{\Phi^*}, s_{\Psi}), V_{s_{\Psi}}^{\mathfrak{M}}(\psi)\}=1$. Thus we have that $V_{s_{\Phi^*}}^{\mathfrak{M}}(B  \psi)=1$, by definition.	

Now assume $B\psi \notin \Phi^*$, so by definition $\psi\notin \Phi^*\backslash B$. Meanwhile $\Phi^*\backslash B$ is consistent.
	Note that if $\Phi^*\backslash B$ is inconsistent, then there is a finite set  $\Gamma = \{\psi_1, \cdots, \psi_n \}\subset \Phi^*\backslash B$ such that $\Gamma $ is inconsistent. By Lemma \ref{lemma_consistent}, $\Gamma' = \{\psi_1, \cdots, \psi_n, B \psi'\}$ is inconsistent, where $\psi '$ is a formula such that $\vdash \psi'$ and $B\psi' \in \Phi^*$. So we have the following argument:	
	$$\begin{array}{l}
		\vdash_{\textbf{B\L}} \neg (\psi_1\,\&\, \cdots\,\&\,\psi_n \,\&\, B \psi'), \\
		\vdash_{\textbf{B\L}} \neg \psi_1 \veebar \cdots \veebar \neg \psi_n \veebar \neg B\psi', \\
		\vdash_{\textbf{B\L}} \neg (\neg \psi_1\veebar \cdots\veebar\neg \psi_n)\rightarrow \neg B \psi',\\
		\vdash_{\textbf{B\L}} (\psi_1\,\&\, \cdots\,\&\,\psi_n)\rightarrow \neg  B \psi' \\ 
			\end{array}$$
Therefore $\Gamma \vdash_{\textbf{B\L}} \neg B \psi'$, and  this results the inconsistency of $\Phi^*$ which is a contradiction.
 Thus there is a maximal and consistent set $\Psi$ containing $\Phi^*\backslash B$. It is enough to consider the set $\Psi$ in which $\psi\notin \Psi$. Hence, $\max\{1-r (s_{\Phi^*}, s_\Psi), V_{s_{\Psi}}(\psi)\}\in \{0, 0.5\}$ and so $V_{s_{\Phi^*}}^{\mathfrak{M}}(B\psi)\in\{0, 0.5\}$.

\end{proof}
\begin{corollary}\label{corollary_model_e}
The following statements hold:
\begin{itemize}
	\item[(i)] 	Let $\Phi$ be a $\textbf{B\L}_\textbf{D}$-consistent set and $\Phi \vdash_{\textbf{B\L}_\textbf{D}}\varphi$. Then there exists a serial model $\mathfrak{M}$ and a state $s$ such that $V_{s}^{\mathfrak{M}}(\varphi)=1.$
	\item[(ii)] Let $\Phi$ be a $\textbf{B\L}_4$-consistent set and $\Phi \vdash_{\textbf{B\L}_\textbf{4}}\varphi$. Then there exists a transitive model $\mathfrak{M}$ and a state $s$ such that $V_{s}^{\mathfrak{M}}(\varphi)=1.$
	\item[(iii)] Let $\Phi$ be a $\textbf{B\L}_T$-consistent set and $\Phi \vdash_{\textbf{B\L}_\textbf{T}}\varphi$. Then there exists a reflexive model $\mathfrak{M}$ and a state $s$ such that $V_{s}^{\mathfrak{M}}(\varphi)=1.$
	\item[(iv)] Let $\Phi$ be a $\textbf{B\L}_x$-consistent set and $\Phi \vdash_{\textbf{B\L}_\textbf{x}}\varphi$, where $x$ is a finite sequence over $\{\textbf{D, 4, T}\}$. Then there exists an appropriate model $\mathfrak{M}$ corresponding to $x$ and a state $s$ such that $V_{s}^{\mathfrak{M}}(\varphi)=1.$
	
\end{itemize}
\end{corollary}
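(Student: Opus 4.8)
The plan is to recycle the canonical model $\mathfrak{M}^c = (S^c, r^c, \pi^c)$ constructed in the proof of Theorem \ref{model_e}, now built from the maximal and $\textbf{A}$-consistent sets for the relevant system $\textbf{A}$, and to verify only two things. First, that the truth lemma \ref{eq014} continues to hold verbatim: its proof (in particular case 4) uses nothing beyond \LB{1}, (R$_{\text{GB}}$) and propositional \L ukasiewicz reasoning, all of which are present in every extension $\textbf{A}$, while the definition of $\mathfrak{M}^c$ and the derivation chain ending in $B\psi\in\Phi^*$ are untouched by the presence of additional axioms. Second — and this is the real content of the corollary — that each extra axiom forces the corresponding frame condition on $\mathfrak{M}^c$.

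Granting the truth lemma, every part closes with the same final step used in Theorem \ref{model_e}: extend the $\textbf{A}$-consistent set $\Phi$ to a maximal $\textbf{A}$-consistent set $\Phi^*$ by Lemma \ref{maximal}(i); since $\Phi\vdash_{\textbf{A}}\varphi$ and $\Phi\subseteq\Phi^*$, maximality forces $\varphi\in\Phi^*$, whence $V_{s_{\Phi^*}}^{\mathfrak{M}^c}(\varphi)=1$ by \ref{eq014}. It remains to check the structure of $\mathfrak{M}^c$. Reflexivity (part (iii), axiom \LB{5}) is immediate: if $\varphi\in\Phi\backslash B$, i.e.\ $B\varphi\in\Phi$, then $B\varphi\rightarrow\varphi\in\Phi$ by Lemma \ref{maximal}(ii)-3, so $\varphi\in\Phi$ by (ii)-2; hence $\Phi\backslash B\subseteq\Phi$ and $r^c(s_\Phi,s_\Phi)=1$. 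Transitivity (part (ii), axiom \LB{3}) is an equally direct inclusion argument: given $r^c(s_\Phi,s_\Theta)=r^c(s_\Theta,s_\Psi)=1$, i.e.\ $\Phi\backslash B\subseteq\Theta$ and $\Theta\backslash B\subseteq\Psi$, take $\varphi\in\Phi\backslash B$; from $B\varphi\in\Phi$ and $B\varphi\rightarrow BB\varphi\in\Phi$ we get $BB\varphi\in\Phi$, so $B\varphi\in\Phi\backslash B\subseteq\Theta$, then $\varphi\in\Theta\backslash B\subseteq\Psi$, and since all values of $r^c$ are crisp this yields $r^c(s_\Phi,s_\Psi)=1\geq\min\{r^c(s_\Phi,s_\Theta),r^c(s_\Theta,s_\Psi)\}$.

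The main obstacle is seriality (part (i), axiom \LB{2}), as it is the only case that is not a pure set-inclusion argument: I must show that $\Phi\backslash B$ is $\textbf{A}$-consistent, so that it admits a maximal consistent extension $\Psi$ and therefore $r^c(s_\Phi,s_\Psi)=1$. Suppose not; then some finite $\{\psi_1,\dots,\psi_n\}\subseteq\Phi\backslash B$ satisfies $\vdash_{\textbf{A}}\neg(\psi_1\,\&\,\cdots\,\&\,\psi_n)$, and since each $\psi_i\in\Phi\backslash B$ we have $B\psi_1,\dots,B\psi_n\in\Phi$. Now I would run exactly the derivation of case 4 of Theorem \ref{model_e} with $\psi:=\perp$: the steps using (\L5), (\L7) and replacement turn $\vdash_{\textbf{A}}\neg(\psi_1\,\&\,\cdots\,\&\,\psi_n)$ into $\vdash_{\textbf{A}}\psi_1\rightarrow(\psi_2\rightarrow\cdots(\psi_n\rightarrow\perp))$, then (R$_{\text{GB}}$) together with repeated application of \LB{1} peels off $B\psi_1,\dots,B\psi_n$ to deliver $B\perp\in\Phi$. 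This contradicts $\neg B\perp\in\Phi$, which holds because $\neg B\perp$ is an instance of \LB{2} and hence lies in $\Phi$ by Lemma \ref{maximal}(ii)-3. The only delicate bookkeeping is letting $B$ distribute across the iterated conjunction, but that is precisely the manipulation already performed in case 4, so no new device is needed. Finally, part (iv) requires no further idea: the three verifications are independent, so the canonical model for $\textbf{B\L}_x$ simultaneously enjoys every frame condition named in $x$, giving an appropriate model in each combined case.
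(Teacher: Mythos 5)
Your proposal is correct and follows essentially the same route as the paper: the paper likewise reuses the canonical model of Theorem \ref{model_e}, establishes seriality for part (i) by showing $\Phi\backslash B$ is consistent via exactly your derivation (currying the inconsistency into $\psi_1\rightarrow(\psi_2\rightarrow\cdots(\psi_n\rightarrow\perp))$, applying (R$_{\text{GB}}$) and peeling with \LB{1} to place $B\perp$ in $\Phi$, contradicting the instance $\neg B\perp$ of \LB{2}), and proves transitivity and reflexivity by the same set-inclusion arguments using \LB{3} and \LB{5}, with part (iv) obtained by combining these independent verifications. Your treatment is if anything slightly more explicit than the paper's on two points it leaves tacit --- that the truth lemma's proof is unaffected by the added axioms, and why $\varphi\in\Phi^*$ follows from $\Phi\vdash_{\textbf{A}}\varphi$ and maximality.
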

\begin{proof}
For  part (i), we prove that if we have  (\L$_B$2) as an axiom, then the  model $\mathfrak{M}$ constructed in Theorem \ref{model_e} is serial. So, it is enough to show that for each maximal and $\textbf{B\L}_\textbf{D}$-consistent set $\Phi$, the set $\Phi\backslash B $ is consistent and therefore by Lemma \ref{maximal} it can be contained in a maximal and consistent set $\Psi$ such that $r (s_{\Phi}, s_{\Psi})=1$.
For the sake of contradiction assume  $\Phi\backslash B $ is not consistent and there is a finite $\Gamma = \{\psi_1, \cdots, \psi_n\} \subset \Phi\backslash B $  such that $\Gamma$ is not consistent, then    $\Gamma' = \{ \psi_1, \cdots,  \psi_n,  \neg\bot\}$ is not consistent, by Lemma \ref{lemma_consistent}.  Thus we have
$$\begin{array}{lll}
	(1)&\vdash_{\textbf{B\L}_D} \neg (\psi_1 \,\&\, \cdots \,\&\, \psi_n \,\&\, \neg \bot) & \text{inconsistency of } \Gamma' \\
	(2)&\vdash_{\textbf{B\L}_D} \neg \psi_1 \veebar \cdots \veebar \neg \psi_n \veebar \neg \neg \bot & (1) (\text{\L}5), (\text{R}_{\text{MP}}) \\
	(3)& \vdash_{\textbf{B\L}_D} \psi_1 \rightarrow (\neg \psi_2 \veebar \cdots \veebar \neg \psi_n \veebar \bot)& (2),(\text{\L}7), \text{replacement}, (\text{R}_{\text{MP}})\\
	(4)& \vdots & \\
	(5)&\vdash_{\textbf{B\L}_D} \psi_1 \rightarrow (\psi_2\rightarrow (\cdots (\psi_n \rightarrow \bot))) & \text{similar procedure } \\
	(6) &\vdash_{\textbf{B\L}_D} B  (\psi_1 \rightarrow (\psi_2\rightarrow (\cdots (\psi_n \rightarrow \bot)))) & (5), (R_{GB}) \\
	(7)&\vdash_{\textbf{B\L}_D} ( B  \psi_1 \,\&\, B  (\psi_1 \rightarrow (\psi_2\rightarrow (\cdots (\psi_n \rightarrow \bot))))  & \\
	& \qquad\qquad\qquad\qquad\qquad\qquad\rightarrow B ((\psi_2\rightarrow (\cdots (\psi_n \rightarrow \bot)))& (\text{\L}_B 1)
\end{array}$$
By Lemma \ref{maximal} and (6) we have:
$$B  (\psi_1 \rightarrow (\psi_2\rightarrow (\cdots (\psi_n \rightarrow \bot))))\in \Phi,$$
and similarly, by (7) we have:
$$( B  \psi_1 \,\&\, B  (\psi_1 \rightarrow (\psi_2\rightarrow (\cdots (\psi_n \rightarrow \bot)))) \rightarrow B ((\psi_2\rightarrow (\cdots (\psi_n \rightarrow \bot))) \in \Phi. $$
Also, from $\Gamma\subset \Phi\backslash B $  we have $B  \psi_1 \in \Phi$, thus $B ((\psi_2\rightarrow (\cdots (\psi_n \rightarrow \bot)) \in \Phi$. By a similar argument, we obtain that $B  \bot \in \Phi$, which is a contradiction to this assumption that (\L$_B$2) belongs to the  consistent set $\Phi$.

For part (ii), let $\Phi$ be $\textbf{B\L}_4$-consistent.
 We  show that if $s_{\Theta_1} , s_{\Theta_2}, s_{\Theta_3} \in S $ in model $\mathfrak{M}$, then we have 
$$r  (s_{\Theta_1}, s_{\Theta_3}) \geq \min \{r  (s_{\Theta_1}, s_{\Theta_2}), r  (s_{\Theta_2}, s_{\Theta_3})\}.$$
We prove that if $r  (s_{\Theta_1}, s_{\Theta_2}) = r  (s_{\Theta_2}, s_{\Theta_3}) = 1$, and so we have $r  (s_{\Theta_1}, s_{\Theta_3})=1$. The other cases are easy. From the assumption we have:
\begin{align}
	& \{\varphi \,|\, B \varphi \in \Theta_1\} \subseteq \Theta_2, \label{eq005}\\
	& \{\varphi \,|\, B \varphi \in \Theta_2\} \subseteq \Theta_3. \label{eq012}
\end{align}
From $B  \varphi \in \Theta_1$ and axiom $(\text{\L}_B3)$, we have $B  B  \varphi \in \Theta_1$. So $B \varphi \in \Theta_2$ by equation \ref{eq005}, and we obtain that $\varphi \in \Theta_3$ by equation \ref{eq012}, thus $\{\varphi \,|\, B  \varphi \in \Theta_1\}\subseteq \Theta_3$, and we have $r  (s_{\Theta_1}, s_{\Theta_3})=1$.

For part (iii), suppose that $\Phi$ is $\textbf{B\L}_T$-consistent and let $\Theta$ be a maximal and consistent set. We have $\{\varphi\,|\, B  \varphi \in \Theta \} \subseteq \Theta$, and then $r (s_{\Theta}, s_{\Theta}) = 1$.
\\
Case (iv) can be obtained by a similar discussion as the above parts.
\end{proof}

\begin{remark}
	Note that in axiomatic systems containing axiom (\L$_B$5), we only have soundness and not completeness. 
\end{remark}

\begin{theorem} \label{thm00} \textbf{(Completeness)}

\begin{itemize}
	\item[(i)] If $ \vDash\varphi$, then $\vdash_{\textbf{B\L}} \varphi$.
	\item[(ii)]  
	Let $\Gamma$ be a finite set of $D\L L_B$-formulas. Suppose that $\mathcal{M}$ is a class of serial models, and for all $(\mathfrak{M}, s) \in \mathcal{M}$, we have $(\mathfrak{M}, s) \vDash \Gamma$ and $\Gamma \vDash \varphi$. Then it follows that $\Gamma \vdash_{\textbf{B\L}_\textbf{D}} \varphi$.
	\item[(iii)] 	
	Let $\Gamma$ be a finite set of $D\L L_B$-formulas. If $\mathcal{M}$ is a class of transitive models such that for all $(\mathfrak{M}, s) \in \mathcal{M}$, it holds that $(\mathfrak{M}, s) \vDash \Gamma$ and $\Gamma \vDash \varphi$, then it follows that $\vdash_{\textbf{B\L}_\textbf{4}} \varphi$
	\item[(iv)] Let $\Gamma$ be a finite set of $D\L L_B$-formulae. If $\mathcal{M}$ is a class of reflexive models such that for all $(\mathfrak{M}, s)\in \mathcal{M}$, we have $(\mathfrak{M}, s)\vDash\Gamma$ and $\Gamma \vDash \varphi$, then $\vdash_{\textbf{B\L}_\textbf{T}}\varphi$.
	\item[(v)] Let $\Gamma$ be a finite set of $D\L L_B$-formulae. 	Let $x$ be a string containing \textbf{D}, \textbf{4}, or \textbf{T}, and let $\mathcal{M}$ have accessibility relation properties corresponding to the axioms specified in $x$ such that for all $(\mathfrak{M}, s) \in \mathcal{M}$, it holds that $(\mathfrak{M}, s) \vDash \Gamma$. If $\Gamma \vDash \varphi$, then $\vdash_{\textbf{B\L}_\textbf{x}}\varphi$.

\end{itemize}
\end{theorem}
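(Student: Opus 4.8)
The plan is to prove all five parts by contraposition, using the canonical model $\mathfrak{M}^c$ and the truth lemma (equation \ref{eq014}) established in Theorem \ref{model_e}. For part (i), suppose $\nvdash_{\textbf{B\L}} \varphi$. Since $\neg\neg\varphi$ and $\varphi$ are interprovable by (\L11), this makes the singleton $\{\neg\varphi\}$ a $\textbf{B\L}$-consistent set, so Lemma \ref{maximal}(i) extends it to a maximal $\textbf{B\L}$-consistent set $\Phi^*$ with $\neg\varphi \in \Phi^*$. The truth lemma then yields $V_{s_{\Phi^*}}^{\mathfrak{M}^c}(\neg\varphi) = 1$, hence $V_{s_{\Phi^*}}^{\mathfrak{M}^c}(\varphi) = 0 \neq 1$. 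Thus $\varphi$ is refuted in a pointed model, so $\not\vDash\varphi$, which is exactly the contrapositive of (i).

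For parts (ii)--(v) the argument is uniform; write $\textbf{A}$ for the relevant system ($\textbf{B\L}_\textbf{D}$, $\textbf{B\L}_\textbf{4}$, $\textbf{B\L}_\textbf{T}$, or $\textbf{B\L}_x$) and assume, towards the contrapositive, that $\Gamma \nvdash_{\textbf{A}} \varphi$. Then $\Gamma$ is itself $\textbf{A}$-consistent (otherwise it would derive every formula, in particular $\varphi$), so Theorem \ref{consistent_e} gives that $\Phi := \Gamma \cup \{\neg\varphi\}$ is $\textbf{A}$-consistent, and Lemma \ref{maximal}(i) extends it to a maximal $\textbf{A}$-consistent set $\Phi^*$. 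I would then invoke Corollary \ref{corollary_model_e}: the canonical model built from $\textbf{A}$-consistent sets is serial, transitive, reflexive, or satisfies the combined frame conditions coded by $x$, according to which extra axioms $\textbf{A}$ carries, so $\mathfrak{M}^c$ lies in the class of models at issue. Since $\Gamma \subseteq \Phi^*$, the truth lemma gives $(\mathfrak{M}^c, s_{\Phi^*}) \vDash \Gamma$, while $\neg\varphi \in \Phi^*$ forces $V_{s_{\Phi^*}}^{\mathfrak{M}^c}(\varphi) = 0$, i.e. $(\mathfrak{M}^c, s_{\Phi^*}) \not\vDash \varphi$. This pointed model witnesses $\Gamma \not\vDash \varphi$ over the class in question, contradicting the hypothesis; hence $\Gamma \vdash_{\textbf{A}} \varphi$.

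The main obstacle is that Theorem \ref{model_e} and its truth lemma are proved only for $\textbf{B\L}$, whereas (ii)--(v) need them for the extensions. I would resolve this by noting that the inductive proof of (\ref{eq014}) --- and in particular the decisive $\varphi = B\psi$ case --- uses only \LB{1} and (R$_\text{GB}$), both present in every $\textbf{A}\supseteq\textbf{B\L}$, so the truth lemma transfers verbatim once the canonical model is rebuilt from maximal $\textbf{A}$-consistent sets; the additional axioms play no role in the truth lemma itself and serve only to impose the frame conditions, which is precisely the content of Corollary \ref{corollary_model_e}. A secondary, more cosmetic point is the intended reading of $\Gamma \vDash \varphi$: I would interpret it as semantic consequence over the whole class of models carrying the stated frame property, so that exhibiting the single serial/transitive/reflexive canonical countermodel already contradicts it. (I would likewise read the conclusions of (iii)--(v) as $\Gamma \vdash_{\textbf{A}} \varphi$ rather than $\vdash_{\textbf{A}} \varphi$, to match the premise-relative statement of (ii).)
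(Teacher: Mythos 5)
Your proposal is correct and takes essentially the same approach as the paper: the paper proves part (i) by exactly your Lindenbaum--truth-lemma argument (via Theorem \ref{consistent_e}, Lemma \ref{maximal}, and Theorem \ref{model_e}) and dispatches (ii)--(v) with the remark that they follow similarly using Corollary \ref{corollary_model_e}, whose proof establishes precisely the frame properties of the canonical models for the extended systems that you invoke. Your additional care --- checking that the truth lemma transfers to any $\textbf{A} \supseteq \textbf{B\L}$ because its proof uses only \LB{1} and (R$_\text{GB}$), and reading the conclusions of (iii)--(v) as $\Gamma \vdash_{\textbf{A}} \varphi$ --- merely fills in details the paper leaves implicit.
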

\begin{proof}
We only prove part (i), the other parts can be concluded similarly by using Corollary \ref{corollary_model_e}.
For the sake of contradiction assume we have $\nvdash_{\textbf{B\L}} \varphi$. Thus by Theorem \ref{consistent_e}, $ \{\neg \varphi\}$ is a ${\textbf{B\L}}$-consistent set, and by Lemma \ref{maximal} we have a maximal and ${\textbf{B\L}}$-consistent set $\Phi^*$ which contains $\{\neg \varphi\}$. Therefore there exists a model $\mathfrak{M}$ and a state $s$ such that $V_s^\mathfrak{M}(\neg \varphi)=1$ by Theorem \ref{model_e}. But this is a contradiction to $\vDash \varphi$. 
\end{proof}
	
	\section{Conclusion} \label{conclusion}
	
		We suggested a doxastic \L ukasiewicz logic and some of its extensions corresponded to some traditional axioms in classical epistemic logic. Our proposed semantics is based on  Kripke models where accessibility relations and atomic propositions take infinite values in the MV-algebra [0,1]. Moreover, we proved that some proposed axiomatic systems are sound and complete corresponding to the appropriate classes of models.
	
%
	%
	%

\end{document}